\documentclass[12pt,english]{article}
\usepackage[T1]{fontenc}
\usepackage[latin9]{inputenc}
\usepackage{geometry}
\geometry{verbose,tmargin=1in,bmargin=1in,lmargin=1in,rmargin=1in}
\usepackage{array}
\usepackage{units}
\usepackage{multirow}
\usepackage{amsthm}
\usepackage{amsmath}
\usepackage{amssymb}
\usepackage{graphicx}
\usepackage{setspace}
\usepackage[authoryear]{natbib}
\PassOptionsToPackage{normalem}{ulem}
\usepackage{ulem}
\doublespacing

\makeatletter

\providecommand{\tabularnewline}{\\}
\newcommand{\lyxdot}{.}

 \theoremstyle{definition}
  \newtheorem{example}{\protect\examplename}
\theoremstyle{plain}
\newtheorem{thm}{\protect\theoremname}
  \theoremstyle{plain}
  \newtheorem{cor}{\protect\corollaryname}
  \theoremstyle{plain}
  \newtheorem{lem}{\protect\lemmaname}

\setcounter{MaxMatrixCols}{30}

\DeclareMathOperator{\N}{N}

\DeclareMathOperator{\pfdr}{pFDR}

\DeclareMathOperator{\lfdr}{LFDR}

\DeclareMathOperator{\reg}{reg}

\DeclareMathOperator{\loo}{L0O}
\DeclareMathOperator{\luo}{L1O}
\DeclareMathOperator{\mdl}{MDL}

\DeclareMathOperator{\lmo}{L\nicefrac{1}{2}O}

\DeclareMathOperator{\talt}{\theta_{alt}}

\DeclareMathOperator{\htloo}{\hat{\theta}^{L0O}}

\DeclareMathOperator{\hitmdl}{\hat{\theta}_{i}^{MDL}}

\DeclareMathOperator{\hitluo}{\hat{\theta}_{i}^{L1O}}

\makeatother

\usepackage{babel}
  \providecommand{\examplename}{Example}
  \providecommand{\lemmaname}{Lemma}
\providecommand{\corollaryname}{Corollary}
\providecommand{\theoremname}{Theorem}

\begin{document}

\title{Empirical Bayes methods corrected for small numbers of tests}

\author{Marta Padilla and David R. Bickel %
\thanks{The authors thank Ye Yang and Zhengmin Zhang for relevant discussions,
Zhenyu Yang for proofreading, and Corey Yanofsky for both. We also
thank the staff at Editage for copy editing the manuscript. This work
was partially supported by the Faculty of Medicine of the University
of Ottawa, by the Canada Foundation for Innovation, and by the Ministry
of Research and Innovation of Ontario. %
}}

\maketitle
~\\
Ottawa Institute of Systems Biology\\
Department of Biochemistry, Microbiology, and Immunology\\
University of Ottawa\\
451 Smyth Rd.\\
Ottawa, Ontario K1H 8M5\\
dbickel@uottawa.ca\\
~\\
\begin{abstract}
Histogram-based empirical Bayes methods developed for analyzing data
for large numbers of genes, SNPs, or other biological features tend
to have large biases when applied to data with a smaller number of
features such as genes with expression measured conventionally, proteins,
and metabolites. To analyze such small-scale and medium-scale data
in an empirical Bayes framework, we introduce corrections of maximum
likelihood estimators (MLE) of the local false discovery rate (LFDR).
In this context, the MLE estimates the LFDR, which is a posterior
probability of null hypothesis truth, by estimating the prior distribution.
The corrections lie in excluding each feature when estimating one
or more parameters on which the prior depends. An application of the
new estimators and previous estimators to protein abundance data illustrates
how different estimators lead to very different conclusions about
which proteins are affected by cancer. 

The estimators are compared using simulated data of two different
numbers of features, two different detectability levels, and all possible
numbers of affected features. The simulations show that some of the
corrected MLEs substantially reduce a negative bias of the MLE. (The
best-performing corrected MLE was derived from the minimum description
length principle.) However, even the corrected MLEs have strong negative
biases when the proportion of features that are unaffected is greater
than 90\%. Therefore, since the number of affected features is unknown
in the case of real data, we recommend an optimally weighted combination
of the best of the corrected MLEs with a conservative estimator that
has weaker parametric assumptions. 
\end{abstract}
\textbf{Keywords:} empirical Bayes; local false discovery rate; medium-dimensional
biology; medium-scale inference; minimum description length; penalized
likelihood; reduced likelihood; selection bias; small-dimensional
biology; small-scale inference; Type II maximum likelihood\newpage{}

\section{Introduction\label{sec:Introduction}}

\subsection{False discovery rates for genomics applications}

In genomics, new technologies facilitate the simultaneous measurement
of a wide variety of features, up to hundreds of thousands in number.
Examples of such biological features include genes, locations in the
brain, and single-nucleotide polymorphisms (SNPs) in genome-wide association
studies. A multiple testing problem arises in the analysis of data
involving $N$ features $\left\langle X_{1},X_{2},\ldots,X_{N}\right\rangle $
of every individual belonging to two different groups, labeled \emph{treatment}
and \emph{control} for convenience. For the $i$th feature and a corresponding
effect size $\theta_{i}$, a function \emph{T} defines the statistic
$T_{i}=T\left(X_{i}\right)$ that is used to test the null hypothesis
that $\theta_{i}=\theta_{0}$, where $\theta_{0}$ is the parameter
value corresponding to no effect. For example, a common objective
in genomics is to discover the genes that are differentially expressed
between the treatment and control groups of individuals. Thus, gene
expression data analysis involves testing $N$ null hypotheses of
equivalent expression.

Let $A_{i}$ denote the variable indicating whether the $i$th alternative
hypothesis is true. In the case of a two-sided alternative, $A_{i}=1$
if $\theta_{i}\ne\theta_{0}$ but $A_{i}=0$ if $\theta_{i}=\theta_{0}$.
For example, $A_{i}=1$ means the $i$th feature is \uline{a}ffected
by (or \uline{a}ssociated with) the treatment, disease, or other
perturbation. The $i$th null hypothesis corresponds to a\emph{ discovery}
of an effect if the statistic $T_{i}$ falls within some \emph{rejection
region} $\mathcal{T}$, in which case, the $i$th null hypothesis
is rejected. A discovery\emph{ }of an effect is a \emph{false discovery}
if there is no effect $\left(A_{i}=0\right)$; otherwise, it is a
\emph{true discovery} $\left(A_{i}=1\right)$.

The terminology follows \citet{RefWorks:288}, who introduced the
\emph{false discovery rate} (FDR) as an error measure for multiple
testing. Many variants of the FDR can be found in literature, including
the\emph{ Bayesian FDR} \citep{RefWorks:54} or \emph{nonlocal FDR}
(NFDR) \citep{BFDR} and the \emph{local FDR} (LFDR) \citep{RefWorks:53}.
In particular, the NFDR is the probability that a null hypothesis
is true, conditional on its rejection:
\[
\Psi\left(\mathcal{T}\right)=\Pr\left(A_{i}=0\vert T_{i}\in\mathcal{T}\right)=\frac{E\left(N_{0}\left(\mathcal{T}\right)\right)}{E\left(N_{+}\left(\mathcal{T}\right)\right)},
\]
where $N_{0}\left(\mathcal{T}\right)$ denotes the number of false
discoveries and $N_{+}\left(\mathcal{T}\right)$ denotes the total
number of discoveries \citep{efron_large-scale_2010}. ($\Psi$ is
used to abbreviate $\psi\varepsilon\upsilon\delta\acute{\eta}\varsigma$,
pseudo/false). The LFDR for the $i$th feature is defined as the probability
that the null hypothesis is true given the statistic $t_{i}$, the
observed realization of $T_{i}=T(X_{i})$ \citep{efron_large-scale_2010}.
That is,

\begin{equation}
\psi_{i}=\Psi\left(\left\{ t_{i}\right\} \right)=\Pr\left(A_{i}=0\vert T_{i}=t_{i}\right),\label{eq:LFDR-1-1-2}
\end{equation}
which assumes that $T_{i}$ has a common probability density function
$g_{\theta_{0}}$ conditional on the null hypothesis that $\theta_{i}=\theta_{0}$
and another probability density function $g_{\text{\text{{alt}}}}$
conditional on the alternative hypothesis that $\theta_{i}\ne\theta_{0}$.
According to Bayes's theorem,
\begin{equation}
\psi_{i}=P\left(\theta_{i}=\theta_{0}\vert t_{i}\right)=\frac{\pi_{0}g_{\theta_{0}}\left(t_{i}\right)}{g\left(t_{i}\right)},\label{eq:LFDR}
\end{equation}
where $\pi_{0}=P\left(\theta_{i}=\theta_{0}\right)$ is the expectation
value of the proportion of null hypotheses that are true and $g\left(t_{i}\right)$
is the marginal probability density of the test statistic: 
\begin{equation}
g\left(t_{i}\right)=\pi_{0}g_{\theta_{0}}\left(t_{i}\right)+\left(1-\pi_{0}\right)g_{\text{\text{{alt}}}}\left(t_{i}\right).\label{eq:mixture-density}
\end{equation}
As $\pi_{0}$ and $g\left(t_{i}\right)$ are unknown, they are estimated
with empirical Bayesian methods to obtain the estimated LFDR by making
substitutions into equations \eqref{eq:LFDR}-\eqref{eq:mixture-density}.

\subsection{Motivation and overview}

While high-dimensional biology involves measurements over numerous
features, sometimes millions in number, small-dimensional biology
involves measurements over fewer features. Smaller-scale inference
problems arise not only when the total data set represents a small
number of genes, proteins, metabolites, voxels, or other features
\citep[e.g.,][]{10.1371/journal.pone.0009834}, but also when there
are subsets of a large number of features that have something in common
that distinguishes them from the other features in the data set. For
example, \citet[§7]{EFRON2008197} estimated the LFDR for each voxel
as a member of a reference class of 82 voxels at the same physical
location. The measurements of the other 15,461 voxels are less relevant
to the truth of a null hypothesis corresponding to a voxel in the
smaller reference class.

Unfortunately, the statistical methods that have been successfully
applied to large-scale inference problems are not always directly
applicable to inference problems involving considerably smaller dimensions.
In particular, in the estimation of LFDR, commonly used methods of
estimating the unknown parameters $\pi_{0}$ and $g\left(t_{i}\right)$
in equations \eqref{eq:LFDR} and \eqref{eq:mixture-density} involve
the histogram-based estimation of $g_{\text{\text{\text{{alt}}}}}\left(t_{i}\right)$
\citep[e.g., ][]{RefWorks:55,RefWorks:208}. While this is highly
reliable for  data   sets  with several thousand features \citep{mse2008,shrink2009},
it has a high bias for data sets with small numbers of features. Therefore,
special statistical methods are required when the number of features
is too large for conventional hypothesis testing and yet too small
for methods developed for an extremely large number of features. Hence,
we propose new methods for the estimation of the LFDR in small-scale
inference problems.

This paper is organized as follows. First, Section \ref{sec:Data-reduction}
recalls methods of eliminating a nuisance parameter by reducing the
data vector $x_{i}$ of the \emph{i}th feature to a statistic $T\left(x_{i}\right)$
of smaller dimension. Section \ref{sec:Methods} reviews certain known
LFDR estimators and presents the proposed LFDR estimation techniques.
The application of the new LFDR estimators to a data set with 20 proteins
is described in Section 4. The new LFDR estimators are then tested
and compared using simulated data sets, as described in Section 5.
Finally, Section \ref{sec:Discussion} concludes the paper with a
discussion. Asymptotic results are provided in Appendices A and B
to explain the information-theoretic background behind one of the
new estimators and to relate it to maximum likelihood estimation,
respectively.

\section{\label{sec:Data-reduction}Data reduction and likelihood}

Let $x\in\mathcal{X}$ be a vector of measurements of one feature.
Note that since only one feature is considered in this section, the
subscript \emph{{}``i''} is not used, except in Example \ref{exa:absT-stat},
where a generalization to \emph{N} features is shown. The observed
data vector $x\in\mathcal{X}$ is considered a realization of the
random variable $X$ of probability distribution $P_{\theta,\lambda}$
that admits a probability density function $f_{\theta,\lambda}$ with
respect to some dominating measure, where $\theta\in\Theta$ is the
parameter of interest and $\lambda\in\Lambda$ is the nuisance parameter.
In the case of discrete $X$, the density function is defined with
respect to the counting measure on $\mathcal{X}$. For some known
$\theta_{0}\in\Theta$, we have $\theta=\theta_{0}$ under the null
hypothesis or narrow model and $\theta\ne\theta_{0}$ under the alternative
hypothesis or wide model. 

The following two types of likelihood correspond to different ways
of reducing a vector $x$ to a scalar statistic and of eliminating
the nuisance parameter. Which of the two methods is appropriate depends
on the original parametric family $\left\{ f_{\theta,\lambda}:\theta\in\Theta,\lambda\in\Lambda\right\} $
and on which parameter is of interest.

\subsection{Conditional likelihood}

Consider the functions $S$ and $T$ such that $S\left(X\right)$
and $T\left(X\right)$ are statistics that together contain all the
information in $X$. If $S\left(X\right)$ does not depend on $\theta$
and if the probability density function $g_{\theta}=f_{\theta}\left(\bullet\vert S\left(X\right)=S\left(x\right)\right)$
of the data conditional on $S\left(x\right)$, the realized value
of that statistic, does not depend on $\lambda$, then the function
$\ell$ defined by
\begin{equation}
\ell\left(\theta\right)=g_{\theta}\left(T\left(x\right)\right)=f_{\theta}\left(T\left(x\right)\vert S\left(X\right)=S\left(x\right)\right)\label{eq:cond-lik}
\end{equation}
is called the \emph{conditional likelihood function} given $S\left(x\right)$.
In analogy with equation \eqref{eq:marginal}, \citet[§8.2.1]{Severini2000}
has
\[
f_{\theta,\lambda}\left(x\right)=f_{\theta,\lambda}\left(S\left(x\right),T\left(x\right)\right)=g_{\theta}\left(T\left(x\right)\right)f_{\theta,\lambda}\left(S\left(x\right)\right),
\]
where $f_{\theta,\lambda}$ can denote the probably density function
of $X$, $\left\langle S\left(X\right),T\left(X\right)\right\rangle $,
or $S\left(X\right)$, depending on the context.

\begin{example}
\label{exa:severini}\citep[Example 8.47]{Severini2000}. Suppose
that $X_{1}$ is binomial $\left\langle n_{1},\pi_{1}\right\rangle $,
$X_{2}$ is binomial $\left\langle n_{2},\pi_{2}\right\rangle $,
and $X_{1}$ is independent of $X_{2}$. The parameter of interest
is
\[
\theta=\log\frac{\pi_{1}}{1-\pi_{1}}-\lambda,
\]
where $\lambda$ is the nuisance parameter
\[
\lambda=\log\frac{\pi_{2}}{1-\pi_{2}}.
\]
Then,
\[
\log L\left(\theta,\lambda\right)=x_{1}\theta+S\left(x_{1},x_{2}\right)\lambda-n_{1}\log\left(1+e^{\theta+\lambda}\right)-n_{2}\log\left(1+e^{\lambda}\right),
\]
where $S\left(x_{1},x_{2}\right)=x_{1}+x_{2}=s$ is sufficient. Then,
taking $T\left(x_{1},x_{2}\right)=x_{1},$ the conditional log-likelihood
function\emph{ }given $S\left(x_{1},x_{2}\right)$ is

\[
\log\ell\left(\theta\right)=\log g_{\theta}\left(x_{1}\right)=\theta x_{1}-\log K\left(\theta\right),
\]
where 
\[
K\left(\theta\right)=\sum_{j=\max\left(0,s-n_{2}\right)}^{\min\left(n_{1},s\right)}\binom{n_{1}}{j}\binom{n_{2}}{s-j}e^{j\theta}.
\]

\end{example}
Conditional likelihoods are generally available whenever the parameter
of interest is a natural parameter of an exponential family \citep[§10.3]{RefWorks:170}.
For details, see \citet[§8.2.4]{Severini2000}. A recent application
of the conditional likelihood function to genomics data can be found
in \citet{Bickel:enrichment}.

\subsection{Marginal likelihood}

Let $T$ be a measurable function on $\mathcal{X}$. If, for each
$\theta\in\Theta$, the probability density function $g_{\theta}$
of the \emph{statistic }or \emph{reduced data} $T\left(X\right)$
does not depend on the value of $\lambda$, then $\ell\left(\theta\right)=g_{\theta}\left(T\left(x\right)\right)$
defines the \emph{marginal likelihood function }$\ell$.

If, in addition, the conditional distribution of $X$ given $T\left(X\right)=T\left(x\right)$
does not depend on $\theta$, then $T\left(X\right)$ is called \emph{sufficient
}for $\theta$. In that case, no information about $\theta$ is lost
in replacing $X$ with $T\left(X\right)$:
\begin{eqnarray}
f_{\theta,\lambda}\left(x\right) & = & g_{\theta}\left(T\left(x\right)\right)f_{\theta,\lambda}\left(x\vert T\left(X\right)=T\left(x\right)\right)\nonumber \\
 & = & g_{\theta}\left(T\left(x\right)\right)f_{\lambda}\left(x\vert T\left(X\right)=T\left(x\right)\right)\label{eq:marginal}\\
 & = & Cg_{\theta}\left(T\left(x\right)\right),\nonumber 
\end{eqnarray}
where $C$ is constant in $\theta$. The constant is unimportant because
it drops out of likelihood ratios:

\[
\frac{f_{\theta_{1},\lambda}\left(x\right)}{f_{\theta_{0},\lambda}\left(x\right)}=\frac{Cg_{\theta_{1}}\left(T\left(x\right)\right)}{Cg_{\theta_{0}}\left(T\left(x\right)\right)}=\frac{\ell\left(\theta_{1}\right)}{\ell\left(\theta_{0}\right)}
\]
for any value of $\lambda\in\Lambda$. 

\begin{example}
\label{exa:T-stat}Suppose $x$ and $y$ are vectors of $m$ and $n$
values that realize the random variables $X$ and $Y$ of independent
components drawn from normal distributions of unknown means $\xi$
and $\eta$, respectively, and of a common unknown standard deviation
$\sigma$. The parameter of interest is the inverse coefficient of
variation defined by $\theta=\left(\xi-\eta\right)/\sigma$ with $\theta=0$
as the null hypothesis and $\theta\ne0$ as the alternative hypothesis;
the parameter space here is $\Theta=\mathbb{R}^{1}$. A suitable statistic
for data reduction is the two-sample $t$ statistic 
\begin{equation}
T\left(x,y\right)=\frac{\hat{\xi}\left(x\right)-\hat{\eta}\left(y\right)}{\hat{\sigma}\left(x,y\right)\sqrt{m^{-1}+n^{-1}}},\label{eq:t-stat}
\end{equation}
where $\hat{\xi}$, $\hat{\eta}$, and $\hat{\sigma}^{2}$ are the
usual unbiased estimators. Then $g_{\theta}\left(T\left(x,y\right)\right)$,
the probability density of $T\left(X,Y\right)$ evaluated at the observation
$\left\langle x,y\right\rangle $, is the noncentral Student $t$
probability density with $m+n-2$ degrees of freedom and noncentrality
parameter $\left(m^{-1}+n^{-1}\right)^{-1/2}\theta$. \end{example}

The next example encompasses data of multi-dimensional biology.

\begin{example}
\label{exa:absT-stat}Example \ref{exa:T-stat} is extended to $N$
genes, proteins, or other biological features such that $X_{i}\sim\N\left(\xi_{i},\Sigma_{i,m}\right)$
and $Y_{i}\sim\N\left(\eta_{i},\Sigma_{i,n}\right)$ correspond to
the observed outcome $\left\langle x_{i},y_{i}\right\rangle $ for
the $i$th feature, where $i=1,\dots,N$ and $\Sigma_{i,k}$ is the
diagonal covariance matrix of determinant $\sigma_{i}^{2k}$; thus,
$\sigma_{i}$ is the standard deviation of independent measurements
of feature $i$. If whether or not there is an effect on feature $i$
is much more important than the direction of that effect, the parameter
of interest for feature $i$ may be 
\begin{equation}
\theta_{i}=\left|\xi_{i}-\eta_{i}\right|/\sigma_{i},\label{eq:teta-alt-teo}
\end{equation}
the absolute value of the inverse coefficient of variation, with $\theta_{i}=0$
as the null hypothesis, $\theta_{i}>0$ as the alternative hypothesis,
and $\Theta=\left[0,\infty\right)$ as the parameter space. Then $T\left(x_{i},y_{i}\right)$
is the absolute value of the two-sample $t$ statistic for $\left\langle x_{i},y_{i}\right\rangle $
according to equation \eqref{eq:t-stat}, and $T\left(X_{i},Y_{i}\right)$
is distributed as the absolute value of a variate from the noncentral
Student $t$ distribution with $m+n-2$ degrees of freedom and noncentrality
parameter $\delta_{i}=\left(m^{-1}+n^{-1}\right)^{-1/2}\theta_{i}$.
Thus, the density $g_{\theta_{i}}\left(T\left(x_{i},y_{i}\right)\right)$
for the $i$th feature is the probability density of $T\left(X_{i},Y_{i}\right)$
evaluated at $\left\langle x_{i},y_{i}\right\rangle $. \citet{NMWL,smallScale}
illustrated different methods of penalized maximum likelihood estimation
of the LFDR under this model.
\end{example}
\citet[§8.3]{Severini2000} and \citet{RefWorks:127} provide additional
examples of the marginal likelihood, also called the \emph{reduced
likelihood} and not to be confused with the likelihood integrated
with respect to a prior distribution.

\section{\label{sec:Methods}Local false discovery rate estimation}

As mentioned in Section \ref{sec:Introduction}, previous estimators
of FDR and LFDR are highly biased for a moderate or small number of
hypotheses. We present several strategies in this section to reduce
that bias.

\subsection{Previous LFDR estimators\label{sub:Published-methods}}

In this subsection, we review the previous LFDR estimators that lay
the foundations on which our new estimators are constructed.

\subsubsection{LFDR estimates based on other false discovery rates }

Recall from Section \ref{sec:Introduction} that the $i$th null hypothesis
is rejected if the statistic $t_{i}$ falls within some \emph{rejection
region} $\mathcal{T}$. To avoid the specification of such a rejection
region $\mathcal{T}$, an estimated q-value $q\left(p_{i}\right)$
is commonly calculated for the $i$th p-value $p_{i}$ among the $N$
p-values. The rejection region $\mathcal{T}_{\alpha}$ is a function
of the significance level $\alpha$, the usual Type I error rate of
rejecting the $i$th null hypothesis if and only if $p_{i}\le\alpha$;
thus, the estimated q-values, herein called \emph{q-values} to follow
contemporary terminology \citep{ISI:000272935000021}, are given by
\begin{equation}
q\left(p_{i}\right)=\min_{\alpha\in\left[p_{i},1\right]}\widehat{\pfdr}\left(\mathcal{T}_{\alpha}\right);\,\left[i=1,\dots,N\right],\label{eq:q-value}
\end{equation}
where $\widehat{\pfdr}\left(\mathcal{T}_{p_{i}}\right)$ is an estimate
of the positive FDR (pFDR) on the rejection region $\mathcal{T}_{p_{i}}$
\citep{RefWorks:282}. Thus, the q-value is the lowest estimated pFDR
at which the $i$th null hypothesis is rejected. Because the latter
effectively uses 1 as an estimate of $\pi_{0}$, it will be called
QV1 in order to distinguish it from $q\left(p_{i}\right)$, which
is called QV. 

In addition, conservative LFDR estimators based on the binomial distributions
have been proposed by \citet{BFDR}. The estimator that \citet{BFDR}
called the {}``MLE'' is renamed in this paper to avoid confusion
with the estimator addressed in the next subsection. We denote the
version that uses the estimate of $\pi_{0}$ described in \citet{RefWorks:282}
as the \emph{binomial-based estimator} (BBE) to distinguish it from
BBE1, which instead uses 1 as an estimate of $\pi_{0}$.

\subsubsection{\label{sub:Maximum-likelihood-estimator}Maximum likelihood estimator}

The maximum likelihood estimator (MLE) described in this subsection
will be called the \emph{leave-zero-out} (L0O) method for reasons
given in Section \ref{sub:Evaluation-LFDR}. The LFDR is estimated
under the assumption that both the null-hypothesis density function
$g_{\theta_{0}}$ and the alternative-hypothesis density function
$g_{\text{\text{{alt}}}}$ of equations \eqref{eq:LFDR}-\eqref{eq:mixture-density}
are members of $\left\{ g_{\theta}:\theta\in\Theta\right\} $, a parametric
family of probability density functions indexed by the interest parameter
$\theta$, which is a member of some parameter space $\Theta$. Thus,
$g_{\text{alt}}=g_{\theta_{\text{alt}}}$, where $\theta_{\text{alt}}\in\Theta$
is unknown and not equal to the known $\theta_{0}\in\Theta$. Any
nuisance parameter must be eliminated, perhaps by using one of the
two methods explained in Section \ref{sec:Data-reduction}. 

For the $i$th feature, the data vector $x_{i}$ is reduced to a scalar
statistic $t_{i}$, as in Examples \ref{exa:severini}-\ref{exa:absT-stat}.
Therefore, $g_{\theta_{0}}\left(t_{i}\right)$ and $g_{\theta_{\text{{alt}}}}\left(t_{i}\right)$
denote the probability densities for the reduced data under the null
hypothesis and the alternative hypothesis, respectively. The true
value of the LFDR for the $i$th feature is, according to equations
\eqref{eq:LFDR}-\eqref{eq:mixture-density} and $g_{\text{{alt}}}=g_{\theta_{\text{alt}}}$,

\begin{equation}
\psi_{i}=\frac{\pi{}_{0}g_{\theta_{0}}\left(t_{i}\right)}{\pi{}_{0}g_{\theta_{0}}\left(t_{i}\right)+\left(1-\pi{}_{0}\right)g_{\theta_{\text{{alt}}}}\left(t_{i}\right)},\label{eq:LFDR-true}
\end{equation}
which is unknown since $\theta_{\text{{alt}}}$ and $\pi{}_{0}$ are
unknown.

The L0O method involves the estimation of the parameters $\pi_{0}$
and $\theta_{\text{alt}}$. These estimated parameters $\left\langle \hat{\theta}^{\loo},\hat{\pi}_{0}^{\loo}\right\rangle $
are the maximum likelihood estimates of the true parameters given
by 
\begin{equation}
\left\langle \hat{\theta}^{\loo},\hat{\pi}_{0}^{\loo}\right\rangle =\arg\sup_{\left\langle \theta,\pi_{0}\right\rangle \in\Theta\times\left[0,1\right]}\prod_{j=1}^{N}\left(\pi_{0}g_{\theta_{0}}\left(t_{i}\right)+\left(1-\pi_{0}\right)g_{\theta}\left(t_{i}\right)\right).\label{eq:maxlik-L0O}
\end{equation}
Therefore, with substitution into equation \eqref{eq:LFDR}, the estimated
LFDR for the \emph{i}th feature is 

\begin{equation}
\hat{\psi}_{i}^{\loo}=\frac{\hat{\pi}_{0}^{\loo}g_{\theta_{0}}\left(t_{i}\right)}{\hat{\pi}_{0}^{\loo}g_{\theta_{0}}\left(t_{i}\right)+\left(1-\hat{\pi}_{0}^{\loo}\right)g_{\htloo}\left(t_{i}\right)}.\label{eq:LFDR-L0O}
\end{equation}
This estimator has been used with marginal likelihood \citep{GWAselect,smallScale}
and conditional likelihood \citep{Bickel:enrichment}. Similarly,
\citet{ParametricMixtureLFDR} had estimated the LFDR by maximizing
the likelihood over exponential families.

\subsection{New LFDR estimators\label{sub:Proposed-strategies}}

Here, 5 novel LFDR estimators are proposed: 3 are corrected MLEs,
and the other 2 are related to the BBE. The corrected MLEs are based
on equation \eqref{eq:LFDR}. The fourth technique is an approximation
of the BBE, and the last new estimator is a combination of the BBE
and one of the corrected MLEs.

\subsubsection{Corrected MLEs}

The three methods presented here correct the bias of the L0O method
that results from using the same statistic $t_{i}$ to evaluate the
density functions and to estimate $\pi_{0}$ and $\theta_{\text{alt}}$.
This is accomplished by removing dependence of the estimators on $t_{i}$
prior to evaluating the density functions at $t_{i}$. While that
negative bias vanishes as the number of features increases (Appendix
B), it can be unacceptably large for small numbers of features.

The first corrected MLE is called the \emph{minimum description length}
(MDL) method. Although the method was inspired by the MDL principle
(Appendix A), the general idea of estimating a prior on the basis
of exchangeable features other than the feature under consideration
is implicit in \citet{Goodman2004b}; cf. \citet{Gastpar2010890}
and J. Cuzick's discussion of \citet{Aitkin1991b}. The MDL method
uses modified estimates of parameters $\pi_{0}$ and $\theta_{\text{\text{{alt}}}}$
for the $i$th feature, denoted as $\left\langle \hat{\theta}_{i}^{\mdl},\hat{\pi}_{0i}^{\mdl}\right\rangle $
for $i\in\left\{ 1,\dots,N\right\} $. These estimated parameters
are obtained by maximizing the likelihood function:

\begin{equation}
\left\langle \hat{\theta}_{i}^{\mdl},\hat{\pi}_{0i}^{\mdl}\right\rangle =\arg\sup_{\left\langle \theta,\pi_{0}\right\rangle \in\Theta\times\left[0,1\right]}\prod_{j=1,\, j\ne i}^{N}\left(\pi_{0}g_{\theta_{0}}\left(t_{i}\right)+\left(1-\pi_{0}\right)g_{\theta}\left(t_{i}\right)\right).\label{eq:maxlik-MDL}
\end{equation}
Note that the product is obtained over all features except for the
\emph{i}th feature. Accordingly, the MDL estimator of LFDR for the
\emph{i}th feature is given by 
\begin{equation}
\hat{\psi}_{i}^{\mdl}=\frac{\hat{\pi}_{0i}^{\mdl}g_{\theta_{0}}}{\hat{\pi}_{0i}^{\mdl}g_{\theta_{0}}\left(t_{i}\right)+\left(1-\hat{\pi}_{0i}^{\mdl}\right)g_{\hitmdl}\left(t_{i}\right)}.\label{eq:LFDR-MDL}
\end{equation}

The second corrected MLE estimator, called \emph{leave-one-out} (L1O),
is the same as the MDL except that the L0O estimate of $\pi_{0}$
is used instead of $\hat{\pi}_{0i}^{\mdl}$. Therefore, in L1O, three
steps are involved. First, the parameters $\left\langle \hat{\theta}^{\loo},\hat{\pi}_{0}^{\loo}\right\rangle $
are calculated from the likelihood function \eqref{eq:maxlik-L0O}
involved in the L0O method, which includes all the features. Second,
similar to MDL, the likelihood function involving all features, except
for the \emph{i}th feature, is maximized for every feature using the
$\hat{\pi}_{0}^{\loo}$ obtained in the previous step. Therefore,
in this step, the interest parameter for all $i\in\left\{ 1,\dots,N\right\} $
is estimated as 

\begin{equation}
\hat{\theta}_{i}^{\luo}=\arg\sup_{\theta\in\Theta}\prod_{j=1,\, j\ne i}^{N}\left(\hat{\pi}_{0}^{\loo}g_{\theta_{0}}\left(t_{j}\right)+\left(1-\hat{\pi}_{0}^{\loo}\right)g_{\theta}\left(t_{j}\right)\right),\label{eq:maxlik-L1O}
\end{equation}
leading to the L1O estimator of LFDR for the \emph{i}th feature:

\begin{equation}
\hat{\psi}_{i}^{\luo}=\frac{\hat{\pi}_{0}^{\loo}g_{\theta_{0}}}{\hat{\pi}_{0}^{\loo}g_{\theta_{0}}\left(t_{i}\right)+\left(1-\hat{\pi}_{0}^{\loo}\right)g_{\hitluo}\left(t_{i}\right)}.\label{eq:LFDR-L1O}
\end{equation}

The MDL and L1O strategies eliminate bias from a double use of feature
data. However, when there is only a single affected feature, the MDL
and L1O do not use any information about $\theta_{\text{\text{{alt}}}}$
to estimate the LFDR of the only affected feature, introducing considerable
bias in estimating $\theta_{\text{\text{{alt}}}}$. 

To overcome this defect, we introduce the third corrected MLE, called
the \emph{leave-half-out} ($\lmo$) estimator. Like L1O, $\lmo$ includes
information about the \emph{i}th feature through $\hat{\pi}_{0}^{\loo}$;
furthermore, half of the information about each left-out feature is
also included in the $\lmo$ through the likelihood function. Such
a function is a \emph{weighted likelihood function}, where the contribution
of the \emph{i}th feature to the overall likelihood function is corrected
by a \emph{weight} $w_{ij}$ given by
\[
w_{ii}\left(\nu\right)=\frac{\nu}{\nu+(N-1)};\, w_{ij}\left(\nu\right)=\frac{1}{\nu+(N-1)}\,\left[j\ne i\right],
\]
where $\nu\in\left[0,1\right]$ is the information (log-likelihood)
weight of $t_{i}$ relative to each $t_{j}$ for the purpose of estimating
the parameter of interest. Thus, the weights satisfy $\underset{j=1}{\overset{N}{\sum}}w_{ij}=1$.
The $\nu$-weighted likelihood function for feature $i$ is

\begin{equation}
L_{i}(\pi_{0},\,\theta_{\text{{alt}}};\, t_{i},\nu)=\prod_{j=1}^{N}\left(\pi_{0}g_{\theta_{0}}\left(t_{j}\right)+\left(1-\pi_{0}\right)g_{\theta_{\text{{alt}}}}\left(t_{j}\right)\right){}^{w_{ij}\left(\nu\right)},\label{eq:wlik}
\end{equation}
and the maximum $\nu$-weighted likelihood is
\begin{equation}
\hat{\theta}_{i}^{\text{L}\nu\text{O}}=\arg\sup_{\theta\in\Theta}L_{i}(\hat{\pi}_{0}^{\loo},\,\theta;\, t_{i},\nu),\label{eq:mwlik}
\end{equation}
degenerating to the L0O and L1O estimators when $\nu=1$ and $\nu=0$,
respectively. Thus, $\hat{\theta}_{i}^{\text{L}\nu\text{O}}$ may
be considered the \emph{leave-$\nu$-out }($\text{L}\nu\text{O}$)
estimator.

The proposed $\lmo$ method includes exactly half of the information
about the \emph{i}th feature in its likelihood function by setting
$\nu=1/2$. Therefore, the new estimator $\hat{\theta}_{i}^{\lmo}$
for all $i\in\left\{ 1,\dots,N\right\} $ is given by the maximization
of the weighted likelihood function according to equations \eqref{eq:wlik}-\eqref{eq:mwlik}.
With such estimated parameters $\left\langle \hat{\theta}_{i}^{\lmo},\hat{\pi}_{0}^{\loo}\right\rangle $,
the LFDR for the \emph{i}th feature ($\hat{\psi}_{i}^{\lmo}$) can
be estimated with equation \eqref{eq:LFDR-L1O}, replacing $\hat{\theta}_{i}^{\luo}$
with $\hat{\theta}_{i}^{\lmo}$, and analogously for $\hat{\psi}_{i}^{\text{L}\nu\text{O}}$
given any $\nu$ between 0 and 1.

Weighted likelihoods have been reviewed by \citet{RefWorks:501} and
applied to the quantification of evidence by \citet{NMWL}.

\subsubsection{BBE-related LFDR estimators\label{sub:BBE-related}}

A method for approximating the BBE \citep{BFDR} is also presented
here. BBE attempts to estimate the LFDR more conservatively than q-values,
which were not originally designed for LFDR estimation. In this section,
we denote the q-values as $q$, which refers to either QV or QV1 (see
Section \ref{sec:Methods}), and $\rho_{i}$ denotes the rank of the
q-values corresponding to the \emph{i}th feature, such that $q_{(1)}\leq q_{(2)}\leq...\leq q_{(N)}$.
The new proposed method directly assigns twice the rank of the q-value
$q_{(2\rho_{i})}$ to the LFDR estimate of the \emph{i}th feature
with the corresponding q-value $q_{(\rho_{i})}$. Therefore, we define
\emph{(estimated) r-values} as
\begin{equation}
r(q_{i})=\begin{cases}
q_{(2\rho_{i})} & \text{{if}}\:\rho_{i}\leq N/2\\
1 & \text{{if}}\:\rho_{i}>N/2.
\end{cases}
\end{equation}
We employ analogous notation for r-values, i.e., RV when it uses QV
and RV1 when it uses QV1.  Our aim is to verify that RV and RV1
approximate BBE and BBE1, respectively. 

Finally, for reasons given in Section \ref{sub:Evaluation-LFDR},
we combine BBE and MDL into an estimator that leverages the strengths
of each. Specifically, the \emph{MDL-BBE} is the linearly combination
of the other two estimators with weights that are optimal for the
hedging game of \citet{Bickel:combine}.

\section{\label{sec:Application}Application}

In Alex Miron\textquoteright{}s laboratory at the Dana-Farber Cancer
Institute, the abundance levels of 20 plasma proteins of 55 women
with HER2-positive breast cancer, 35 women with ER/PR-positive breast
cancer, and 64 healthy women \citep{ProData2009b} were measured.
The respective data vectors $x_{1}^{\text{HER2}},\dots,$ $x_{20}^{\text{HER2}}$,
$x_{1}^{\text{ER/PR}},\dots,x_{20}^{\text{ER/PR}}$, $y_{1},\dots,y_{20}$
were created by adding the first quartile of the abundance levels
(over the 64 healthy women and over all proteins) to each abundance
level and by taking natural logarithms of the resulting sums; similar
conservative prepossessing steps have worked well with gene expression
data \citep{RefWorks:955}. 

The preprocessed data were modeled as normally distributed, as illustrated
in Example \ref{exa:absT-stat}. Following the notation of the example,
$\xi_{i}^{\text{HER2}}$, $\xi_{i}^{\text{ER/PR}}$, and $\eta_{i}$
are the expectation values of $X_{i}^{\text{HER2}}$, $X_{i}^{\text{ER/PR}}$,
and $Y_{i}$, respectively, and are as such interpretable as population
levels of the abundance of protein $i$. The parameters of interest
are $\theta_{i}^{\text{HER2}}=\left|\xi_{i}^{\text{HER2}}-\eta_{i}\right|/\sigma_{i}$
and $\theta_{i}^{\text{ER/PR}}=\left|\xi_{i}^{\text{ER/PR}}-\eta_{i}\right|/\sigma_{i}$,
the standardized levels of the $i$th protein's abundance relative
to the healthy controls. In this context, the LFDR of each protein
is a posterior probability that its average abundance level is not
affected by cancer.

The data were analyzed according to the distributions of $T\left(X_{i}^{\text{HER2}},Y_{i}\right)$
and $T\left(X_{i}^{\text{ER/PR}},Y_{i}\right)$ given in Example \ref{exa:absT-stat}.
The methods of estimating the LFDR described in Section \ref{sub:Published-methods}
were applied to the proteomics data, namely, MDL, L0O, L1O, $\lmo$,
BBE, BBE1, RV, RV1, and MDL-BBE. The results are shown in Figures
\ref{fig:Volcano} and \ref{fig:lfdr-pval}, which represent LFDR
versus the estimated protein abundance ratio and p-value, respectively.
All figures show results for the HER2-positive and ER/PR-positive
groups separately. The volcano plot (Figure \ref{fig:Volcano}) indicates
that the proteins most affected by cancer, showing estimated abundance
ratios furthest from unity, have LFDR estimates close to zero, while
higher LFDR estimates correspond to proteins with estimated abundance
ratios close to unity. From the results shown in both figures, we
can see that the selection of the LFDR estimator is crucial because
for thresholds of the estimated LFDR between 0 and 0.2, many proteins
would be considered affected or unaffected by cancer, depending on
the method. BBE1 and RV1 were omitted from the figures to ensure legibility.

\begin{figure}
\includegraphics[width=17cm]{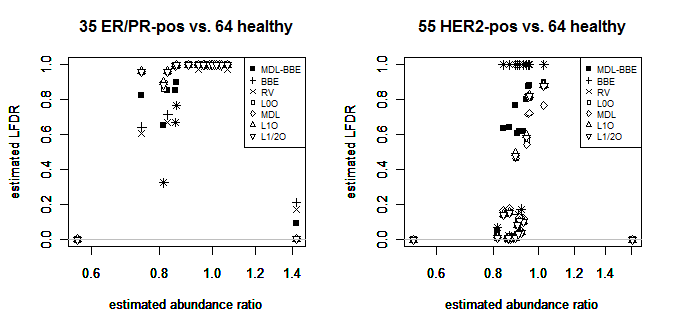}\caption{Volcano plot representing LFDR for protein abundance of both groups,
HER2-positive and ER/PR-positive women, relative to healthy women,
estimated by using different LFDR estimators and represented versus
the estimated protein abundance ratio. The LFDR estimators are MDL,
L0O, L1O, $\lmo$, BBE, RV, and MDL-BBE.\label{fig:Volcano}}
\end{figure}
\begin{figure}
\includegraphics[width=17cm]{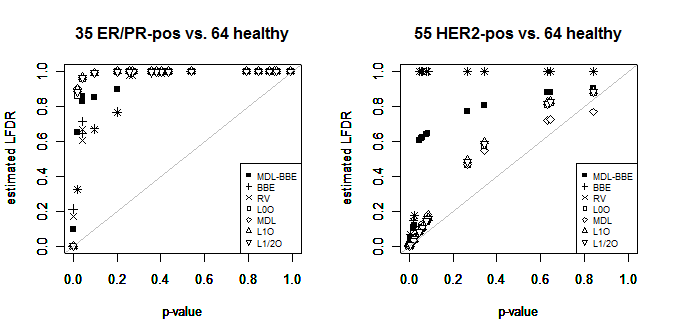}\caption{LFDR for protein abundance of both groups of women with breast cancer,
HER2-positive and ER/PR-positive, relative to healthy women, estimated
by using different estimators and represented versus p-value. The
LFDR estimators are MDL, L0O, L1O, $\lmo$, BBE, RV, and MDL-BBE.
\label{fig:lfdr-pval}}
\end{figure}

\section{\label{sec:Simulations}Simulations}

In this section, the performance of the LFDR estimators described
in Section \ref{sub:Published-methods} is compared using simulated
protein abundance data. Such methods are MDL, L0O, L1O, $\lmo$, BBE,
BBE1, RV, RV1, and a combination of MDL and\emph{ }BBE. The design
of each data set is patterned after that of Sections \ref{sub:Maximum-likelihood-estimator}
and \ref{sec:Application}. It consists of abundance levels of $N$
proteins for two groups,\emph{ sick} and\emph{ healthy,} each containing
5 individuals, for total of $10$ abundance levels per protein. For
the \emph{i}th protein, the log-abundance data are drawn from a normal
distribution with variance $\sigma^{2}=1$ and mean equal to 0, except
for the proteins affected by the disease, which have mean $\xi_{\text{{alt}}}>0$
in the sick group. To represent both barely detectable and highly
detectable differences between the null and alternative distributions,
we consider two values for the effect size, a low value $\left(\xi_{\text{{alt}}}=1.5\right)$
and a high value $\left(\xi_{\text{{alt}}}=4\right)$ relative to
the standard deviation $\left(\sigma=1\right)$. Therefore, we have
two values for the positive noncentrality parameter $\delta_{\text{{alt}}}=\left(m^{-1}+n^{-1}\right)^{-1/2}\theta_{\text{{alt}}}$,
where, in agreement with equation \eqref{eq:teta-alt-teo}, 
\begin{equation}
\theta_{\text{{alt}}}=|\xi_{\text{{alt}}}-0|/\sigma=\xi_{\text{{alt}}},\label{eq:teta-alt}
\end{equation}
and \emph{m} and \emph{n} are the numbers of individuals in the sick
and healthy group, respectively $\left(m=n=5\right)$. Therefore,
the distribution of the affected proteins in the sick group has $\delta_{\text{{alt}}}=2.4$
in the low-effect simulations and $\delta_{\text{{alt}}}=6.3$ in
the high-effect simulations. By contrast, the noncentrality parameter
values are 0 for all the unaffected proteins and for all the proteins
of the healthy group. Then, the LFDR estimators are compared with
regard to the number of proteins in each data set and the number of
affected proteins for 20 simulated data sets of each configuration. 

To facilitate the comparison among the different LFDR estimators and
for specific verification of the similarities between BBE and RV and
BBE1 and RV1, we estimated each estimator's \emph{bias}, the mean
(over all proteins) of the expectation value of the difference between
the estimate and true LFDR. For each LFDR estimator, that bias is
estimated by the mean difference between the estimated LFDR and the
true LFDR, where the mean is over the simulations as well as the proteins.
Thus, 60 LFDR estimates are averaged when the data set has 3 proteins
(mean over 3 proteins and over 20 simulations) and 300 LFDR values
when the data set has 15 proteins (mean over 15 proteins and over
20 simulations). The true value is calculated using equation \eqref{eq:LFDR-true}
with the proportion of proteins that are unaffected as $\pi_{0}$
and with the value of $\theta_{\text{{alt}}}$ given by equation \eqref{eq:teta-alt}. 

The results are shown in Figure \ref{fig:sims}, where the estimated
bias of the LFDR is represented as a function of the number of affected
proteins, for each number of proteins in the data set and for two
different levels of detectability. Although we studied the behavior
of the methods separately for affected and unaffected proteins, the
figures show the estimated biases of the LFDR averaged over all proteins.
Figure \ref{fig:sims}, plots (a) and (b), show the results for a
data set with 3 and 15 proteins, respectively, and for the high level
of detectability. Plots (c) and (d) are the same except that they
correspond to the low detectability level. For better legibility
of the figures, RV1 and BBE1 are not displayed because they have excessively
high estimated bias averaged over either affected or unaffected proteins
or averaged over all proteins. 

It can be seen from Figure \ref{fig:sims} that the LFDR estimates
depend on the number of proteins, the number of affected proteins,
and the detectability level. Note that in the plots, the contribution
of the bias from the affected proteins increases as the number of
affected proteins increases because the protein-averaged results are,
in effect, weighted according to the number of affected or unaffected
proteins. The estimators BBE1 and RV1 are not displayed because the
values of their (positive) biases are much higher than those of the
other estimators. However, the biases of RV and BBE are more moderate,
especially when few proteins are affected.

\section{\label{sec:Discussion}Discussion}

\subsection{Evaluation of the LFDR estimators\label{sub:Evaluation-LFDR}}

Some differences in estimator performance depend on the value of $\delta_{\text{{alt}}}$,
the noncentrality parameter. L0O and L$\nicefrac{1}{2}$O work very
well when $\delta_{\text{{alt}}}$ is high, regardless of the number
of features in the data set (Figure \ref{fig:sims}, (a)-(b)) and
when there is at least one affected feature. When there is no affected
feature, both estimators have highly negative bias (about $-$0.25).
When $\delta_{\text{{alt}}}$ is high, MDL and L1O perform similarly
to L0O and L$\nicefrac{1}{2}$O, except when there is only one affected
feature in the data set, in which case MDL and L1O have excessively
high positive biases for the affected feature. Those biases are not
seen in the plots since they are averaged over all features. These
biases result from the fact that MDL and L1O do not use the data of
the given feature to estimate $\delta_{\text{{alt}}}$. Thus, MDL
and L1O cannot effectively estimate $\delta_{\text{{alt}}}$ when
only one feature is affected, which results in such a noticeable high
positive bias when $\delta_{\text{{alt}}}$ is high. L$\nicefrac{1}{2}$O
overcomes that drawback by including half the information on the unique
affected feature in its likelihood function \eqref{eq:wlik}. In contrast,
BBE and RV are less biased than the other estimators when no features
are affected. However, the values of their conservative (positive)
biases increase with the number of affected features. On the other
hand, when $\delta_{\text{{alt}}}$ is low (Figure \ref{fig:sims},
(c)-(d)), all the corrected MLEs are negatively biased when there
is no affected feature, and BBE and RV have positive biases.

In addition, by comparing the four plots in Figure \ref{fig:sims},
we can see that BBE and RV are extremely similar; only slight differences
appear in cases of few affected proteins. Moreover, the methods gave
similar estimates in the application to real protein data (Figures
\ref{fig:lfdr-pval} and \ref{fig:Volcano}).%
\footnote{However, we found in unpublished work that these estimators diverge
more for an application to a large-scale proteomics data set.%
}

Therefore, since BBE-related estimators show a small bias for no or
a few affected features and since corrected MLEs perform better when
most of the features of the data set are affected, we consider a new
LFDR estimator as the weighted combination of representative estimators
of each type (corrected MLEs and BBE-related estimators): the MDL
and the BBE. Based on performance with 3 features and low $\delta_{\text{{alt}}}$,
we choose to combine MDL and BBE because MDL has the lowest absolute
value of the bias among the corrected MLEs and because the BBE is
simpler than the RV but is similar in performance. Then we applied
the same MDL-BBE method to all cases. The MDL-BBE is an optimal linear
combination of the MDL and the BBE (Section \ref{sub:BBE-related}).

To summarize the findings for each method and each total number of
features in the data set, Table \ref{tab:bias-range} reports the
most extreme values and the median of the biases for $\pi_{0}\geq90$\%
over the numbers of affected features and over both values of $\delta_{\text{{alt}}}$.
We can see from this table that these values are very similar among
the methods of the same type. Corrected MLEs have the most negative
biases, and BBE-related methods have the highest positive biases.
As Table \ref{tab:bias-range} indicates, the MDL-BBE succeeds in
substantially reducing the negativity of the worst-case bias of the
MDL and substantially reducing the highly conservative worst-case
bias of the BBE. In short, the MDL-BBE does not suffer from the main
drawbacks of the other estimators.

Since the focus on the worst-case performance can lead to an overly
pessimistic assessment of small-scale estimation of the LFDR, the
median values are also reported in Table \ref{tab:bias-range}. They
indicate that while estimation is somewhat unreliable for some estimators
when there are only 3 features, it is reliable for all estimators
when there are 15 features. Even so, the reported absolute values
of the biases should be regarded as lower bounds since they were computed
under the independence of features. Further, since the simulations
use the same family of distributions as the MLE-related estimators,
they perform better in the simulations that they would with real data.

\begin{table}
\begin{centering}
\begin{tabular}{|c|c|c|c|c|}
\hline 
\multirow{1}{*}{\textbf{\footnotesize LFDR}} & \multicolumn{2}{c|}{\textbf{\footnotesize all}{\footnotesize{} $\pi_{0}$}} & \multicolumn{2}{c|}{{\footnotesize $\pi_{0}\geq$}\textbf{\footnotesize 90\%}}\tabularnewline
\cline{2-5} 
\textbf{\footnotesize Estimators} & \textbf{\footnotesize 3 features} & \textbf{\footnotesize 15 features} & \textbf{\footnotesize 3 features} & \textbf{\footnotesize 15 features}\tabularnewline
\hline 
\textbf{\footnotesize MDL-BBE} & \textbf{\footnotesize 0.13 {[}$-$0.10, 0.41{]}} & \textbf{\footnotesize 0.01 {[}$-$0.20, 0.31{]}} & \textbf{\footnotesize $-$0.1} & \textbf{\footnotesize $-$0.13 {[}$-$0.2, 0.01{]}}\tabularnewline
\hline 
\textbf{\footnotesize BBE} & \textbf{\footnotesize 0.19 {[}$-$0.07, 0.69{]}} & \textbf{\footnotesize 0.01 {[}$-$0.11, 0.55{]}} & \textbf{\footnotesize $-$0.07} & \textbf{\footnotesize $-$0.07 {[}$-$0.11,$-$0.01{]}}\tabularnewline
{\footnotesize RV} & {\footnotesize 0.18 {[}$-0.08$, 0.69{]}} & {\footnotesize 0.00 {[}$-0.13$, 0.55{]}} & {\footnotesize $-$0.08} & {\footnotesize $-$0.09 {[}$-$0.13,$-$0.02{]}}\tabularnewline
\hline 
\textbf{\footnotesize MDL} & \textbf{\footnotesize 0.02 {[}$-$0.13, 0.12{]}} & \textbf{\footnotesize 0.00 {[}$-$0.30, 0.07{]}} & \textbf{\footnotesize $-$0.13} & \textbf{\footnotesize $-$0.19 {[}$-$0.30, 0.02{]}}\tabularnewline
{\footnotesize L0O} & {\footnotesize $-0.02$ {[}$-0.22$, 0.16{]}} & {\footnotesize $-0.01$ {[}$-0.34$, 0.08{]}} & {\footnotesize $-$0.18} & {\footnotesize $-$0.17 {[}$-$0.26,$-$0.01{]}}\tabularnewline
{\footnotesize L1O} & {\footnotesize 0.02 {[}$-0.17$, 0.20{]}} & {\footnotesize 0.00 {[}$-0.30$, 0.08{]}} & {\footnotesize $-$0.17} & {\footnotesize $-$0.16 {[}$-$0.24, 0.03{]}}\tabularnewline
{\footnotesize L1/2O} & {\footnotesize $-0.02$ {[}$-0.22$, 0.18{]}} & {\footnotesize $0.00$ {[}$-0.32$, 0.08{]}} & {\footnotesize $-$0.17} & {\footnotesize $-$0.17 {[}$-$0.24,$-$0.01{]}}\tabularnewline
\hline 
\end{tabular}
\par\end{centering}

\caption{\label{tab:bias-range}Median {[}minimum, maximum{]} values of the
biases of all the LFDR estimators over all $\pi_{0}$ and over all
$\pi_{0}\geq90$ \%, over the numbers of affected features, and over
both values of the noncentrality parameter. Separate values are given
for each total number of features in the data set.}
\end{table}

\begin{figure}
\includegraphics[width=17cm]{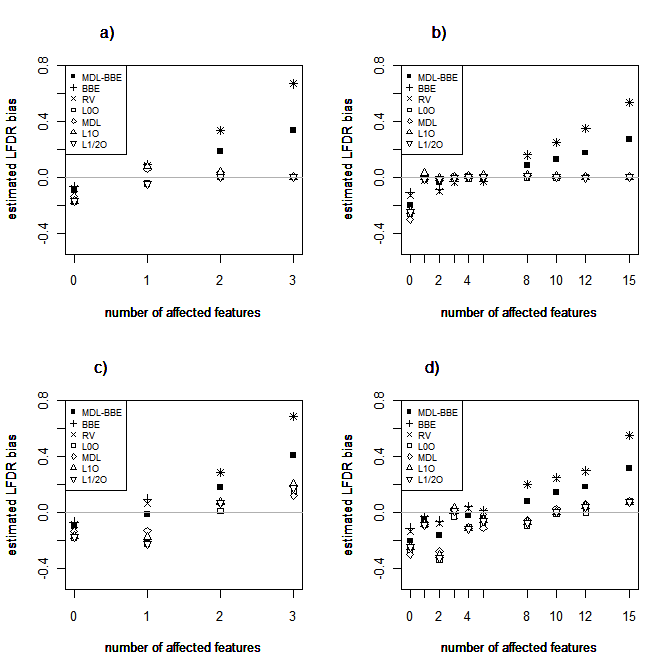}\caption{Estimated bias of several LFDR estimators for an artificial data set
with 3 features ((a) and (c)) and 15 features ((b) and (d)) and cases
of high $\delta_{\text{{alt}}}=6.3$ ((a) and (b)) and low noncentrality
parameter $\delta_{\text{{alt}}}=2.4$ ((c) and (d)) versus increasing
number of affected features. The LFDR estimators are MDL, L0O, L1O,
$\lmo$, BBE, RV, and a combination of MDL and\emph{ }BBE.\label{fig:sims}}
\end{figure}

\subsection{Conclusions}

In this paper, we proposed several LFDR estimators to give reliable
results for small-scale inference. We compared them on simulated data
sets and illustrated their use on a protein-abundance data set that
illustrates that different conclusions would be drawn on the basis
of different estimators. The performance of such methods depends on
the number of features, number of affected features, and values of
the unknown parameters. Simulations showed that the corrected MLEs
have very low biases in all cases when more than 50\% of the features
in the data set are affected, even for a data set with only 3 features.
However, when the proportion of affected features is very small, these
methods have excessively negative biases. In contrast, BBE and RV
have excessively large biases when there is a high proportion of affected
features. Furthermore, this bias increases as the number of affected
features increases in the data set. Therefore, the weighted combination
of an adjusted-MLE (MDL) and a conservative method (RV or BBE)\emph{
}may represent the safest solution for a general scenario in which
the number of affected features is unknown. 
\section*{Colophon}

We used the following packages of R \citep{R2008}: \texttt{Biobase}
\citep{RefWorks:161} and \texttt{qvalue} \citep{Rqvalue} from Bioconductor;
\texttt{locfdr} \citep{RefWorks:208}, \texttt{fBasics} \citep{RfBasics},
and \texttt{distr} \citep{Rdistr} from the CRAN repository.

\section*{\label{sec:Appendix-A}Appendix A: Methods motivating the new MDL
method}

This appendix uses the MDL principle to explain the statistical methods
that led to the MDL method defined by equation \eqref{eq:LFDR-MDL}.
This appendix also has results that lay the foundation for the operating
characteristics of the estimator given in Appendix B. A simple explanation
of basic MDL-theoretic ideas in terms of hypothesis testing is available
in the appendix of \citet{NMWL}. See \citet{RefWorks:374}, \citet{RefWorks:376},
\citet{RefWorks:375}, and \citet{support} for other introductions
to the MDL principle of model selection.

Since $\theta_{\text{{alt}}}$ is unknown, it will be replaced with
a parameter value chosen to minimize the codelength of the data according
to MDL theory, in which the length of a codeword is the number of
independently selected binary digits of equal probability that achieve
the joint probability of that codeword \citep{RefWorks:374}. The
availability of measurements pertaining to features other than the
inference target enables the construction of a universal codelength
function and a close approximation that is computationally more convenient.
The idea that statistical inference minimizes universal codelength
functions is called the \emph{MDL principle} and is often formalized
in terms of minimax problems.

\subsection*{Minimum description length concepts\label{sub:Universality}}

The theory of this section is presented in terms of a parametric family
that is free from nuisance parameters. In many cases, such a family
can be derived using one of the data reduction methods of Section
\ref{sec:Data-reduction}. 

Under the MDL framework, each scheme $\dagger$ for coding the data
under the alternative hypothesis corresponds to a codelength function
$L^{\dagger}$ on $\mathcal{X}$ and thus to a \emph{compressing probability
density function} $g^{\dagger}$ selected from the parametric family
$\left\{ g_{\theta}:\theta\in\Theta\right\} $ before observing $T\left(x\right)$,
the realized value of the statistic, with the goal of minimizing the
codelength $L^{\dagger}\left(T\left(x\right)\right)=-\log g^{\dagger}\left(T\left(x\right)\right)$.
Since $\theta_{0}$ is known, the probability density function of
the statistic under the null hypothesis is known to be $g_{\theta_{0}}$,
which compresses the data with respect to the null model. Accordingly,
the codelength function $L^{0}$ relative to the null hypothesis is
that specified by $L^{0}\left(T\left(x\right)\right)=-\log g_{\theta_{0}}\left(T\left(x\right)\right)$.
Since the base of the logarithm is arbitrary, the inverse logarithm
is denoted by $\log^{-1}\bullet$ rather than by $\exp\bullet$ or
by $2^{\bullet}$. 

Suppose, as in Example \ref{exa:absT-stat}, that there is a vector
$x_{i}$ of measurements for each of the $N$ features and that the
data are reduced to the statistics $T\left(x_{1}\right),\dots,T\left(x_{N}\right)$.
With $L_{i}^{\dagger}\left(T\left(x_{i}\right)\right)$ as the codelength
of $T\left(x_{i}\right)$ relative to the alternative hypothesis,
$\Delta_{i}^{\dagger}\left(T\left(x_{i}\right)\right)=L_{i}^{\dagger}\left(T\left(x_{i}\right)\right)-L^{0}\left(T\left(x_{i}\right)\right)$
is the \emph{information in $T\left(x_{i}\right)$ for discrimination}
favoring the null hypothesis over the alternative hypothesis; cf.
\citet{NMWL,support}. A difference in null and alternative codelengths
has been called a {}``universal test statistic'' \citep{RefWorks:342};
however, that term can cause confusion with $T\left(X_{i}\right)$.

\begin{example}
If the restriction to a parametric family were relaxed,
\begin{equation}
-\log\frac{\hat{g}_{\text{alt}}\left(T\left(x_{i}\right)\right)}{g_{\theta_{0}}\left(T\left(x_{i}\right)\right)}=-\log\frac{1-\widehat{\lfdr}\left(x_{i}\right)}{\widehat{\lfdr}\left(x_{i}\right)}+\log\frac{1-\hat{\pi}_{0}}{\hat{\pi}_{0}}\label{eq:eBayes}
\end{equation}
would be the information for discrimination according to the empirical
Bayes methodology of Section \ref{sec:Introduction}.\end{example}

The \emph{regret} \citep{RefWorks:375} of the codelength function
$L_{i}^{\dagger}$ is 
\[
\reg\left(g_{i}^{\dagger},x_{i}\right)=L_{i}^{\dagger}\left(T\left(x_{i}\right)\right)-\inf_{\theta\in\Theta}\left(-\log g_{\theta}\left(T\left(x_{i}\right)\right)\right)=-\log\frac{g_{i}^{\dagger}\left(T\left(x_{i}\right)\right)}{g_{\hat{\theta}}\left(T\left(x_{i}\right)\right)},
\]
where $L_{i}^{\dagger}$ is given by $L_{i}^{\dagger}\left(T\left(x_{i}\right)\right)=-\log g_{i}^{\dagger}\left(T\left(x_{i}\right)\right)$
and where $\hat{\theta}=\arg\sup_{\theta\in\Theta}g_{\theta}\left(T\left(x\right)\right)$.
Likewise, the regret of the codelength function relative to the null
hypothesis is $\reg\left(g_{\theta_{0}},x_{i}\right)=-\log\left(g_{\theta_{0}}\left(T\left(x_{i}\right)\right)/g_{\hat{\theta}}\left(T\left(x_{i}\right)\right)\right)$.

While the sign of $\Delta_{i}^{\dagger}\left(T\left(x_{i}\right)\right)$
indicates which hypothesis is favored \citep{RefWorks:342}, it can
also be compared to a threshold $J$ of the minimum amount of information
considered sufficient for selecting one hypothesis over the other.
In that case, the probability of observing misleading information
for discrimination has an upper bound for any distributions \textbf{$g_{\theta_{0}}$}
and $g_{i}^{\dagger}$. Specifically, for any $J>0,$
\begin{equation}
P_{\theta_{0},\lambda}\left(\Delta_{i}^{\dagger}\left(T\left(X_{i}\right)\right)\le-J\right)=P_{\theta_{0},\lambda}\left(g_{i}^{\dagger}\left(T\left(X_{i}\right)\right)/g_{\theta_{0}}\left(T\left(X_{i}\right)\right)\ge\log^{-1}J\right)\le1/\log^{-1}J.\label{eq:universal-bound}
\end{equation}

Applications to the probability of observing misleading evidence appear
in \citet{RefWorks:123}. A derivation from the Markov inequality
appears in \citet{hierarchy}. Since the derivation assumes that \textbf{$g_{\theta_{0}}$}
and $g_{i}^{\dagger}$ are genuine probability density functions,
formula \eqref{eq:universal-bound} does not necessarily hold for
pseudo-likelihoods such as profile likelihoods and likelihoods integrated
with respect to an improper prior; however, it does hold for all marginal
and conditional likelihoods \citep{RefWorks:123}.

The following two schemes ($\dagger$ and $\ddagger$) for coding
the reduced data give essentially identical regrets for a sufficiently
large value of $N.$

\subsubsection*{\label{sub:Exact-codelength}Exact codelength}

While the codelength function $L_{i}^{\dagger}$ for the $i$th feature
cannot depend on $x_{i}$, it may depend on $x_{j}$ for all $j\ne i$
as follows. For all $i=1,\dots,N$, define $L_{i}^{\dagger}$ such
that the corresponding probability density function $g_{i}^{\dagger}$
is equal to $g_{\theta_{i}^{\dagger}}$ for the value $\theta_{i}^{\dagger}$
such that

\begin{equation}
\theta_{i}^{\dagger}=\arg\inf_{\theta\in\Theta}\sum_{j\ne i}\min\left(\reg\left(g_{\theta},x_{j}\right),\reg\left(g_{\theta_{0}},x_{j}\right)\right).\label{eq:exact-parameter}
\end{equation}

In words, the code for a given feature uses the distribution in the
parametric family that minimizes the regret summed over all other
features. 

Proportional to $N^{2},$ the computation time can prohibit the use
of the universal compression method for large $N$. For example, $N$
can be in the tens of thousands for gene expression microarrays or
in the hundreds of thousands for genome-wide association studies.
The next coding scheme overcomes this problem because its computation
time is proportional to $N$.

\subsubsection*{Approximate codelength}

The $\dagger$ coding scheme is efficiently approximated by a slightly
illegal scheme denoted by $\ddagger$. It determines the codelength
for statistic $T\left(x_{i}\right)$ under the alternative hypothesis
by using a common probability density function $g^{\ddagger}$ that
is in the parametric family, i.e., $g^{\ddagger}=g_{\theta^{\ddagger}}$
for some $\theta^{\ddagger}\in\Theta$. This is accomplished by minimizing
the regret over all features
\begin{equation}
\theta^{\ddagger}=\arg\inf_{\theta\in\Theta}\sum_{j=1}^{N}\min\left(\reg\left(g_{\theta},x_{j}\right),\reg\left(g_{\theta_{0}},x_{j}\right)\right).\label{eq:approximate-parameter}
\end{equation}
This coding scheme is technically illegal in the sense that $g^{\ddagger}$,
as a function of the observed data for each feature, depends on hindsight.
However, under general conditions, $\theta^{\ddagger}$ approximates
$\theta_{i}^{\dagger}$ for all $i=1,\dots,N$ given sufficiently
large $N$ because the selection of the distribution depends on all
features without giving undue weight to any single feature. The approximation
is supported by the fact that both $\theta^{\dagger}$ and $\theta^{\ddagger}$
are maximum likelihood estimates of $\theta$ under the alternative
hypothesis:
\begin{thm}
\label{thm:approximate-code-MLE}Assume that for some $\theta_{0}\in\Theta$
and $\theta_{\text{alt}}\in\Theta$ such that $\theta_{0}\ne\theta_{\text{alt}}$
and that for all $j\in\left\{ 1,\dots,N\right\} $, each statistic
$T\left(X_{j}\right)$ has probability density $g_{\theta_{j}}$ with
$\theta_{j}\in\left\{ \theta_{0},\theta_{\text{alt}}\right\} $ and
is independent of every $T\left(X_{k}\right)$ with $k\in\left\{ 1,\dots,N\right\} \backslash\left\{ j\right\} $.
It follows that $\theta^{\ddagger}$, if unique, is the maximum likelihood
estimate of $\theta_{\text{alt}}$. \end{thm}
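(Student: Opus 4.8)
The plan is to rewrite the objective in \eqref{eq:approximate-parameter} so that, up to an additive constant in $\theta$, it becomes the negative of a profile log-likelihood for $\theta_{\text{alt.}}$ under the wide model, and then to recognize the resulting optimizer as a maximum likelihood estimate. First I would unfold the regret: writing $\hat{\theta}_{j}=\arg\sup_{\theta\in\Theta}g_{\theta}\left(T\left(x_{j}\right)\right)$, one has $\reg\left(g_{\theta},x_{j}\right)=\log g_{\hat{\theta}_{j}}\left(T\left(x_{j}\right)\right)-\log g_{\theta}\left(T\left(x_{j}\right)\right)$ and, likewise, $\reg\left(g_{\theta_{0}},x_{j}\right)=\log g_{\hat{\theta}_{j}}\left(T\left(x_{j}\right)\right)-\log g_{\theta_{0}}\left(T\left(x_{j}\right)\right)$. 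Since $\min\left(a-u,a-v\right)=a-\max\left(u,v\right)$, the $j$th summand in \eqref{eq:approximate-parameter} equals $\log g_{\hat{\theta}_{j}}\left(T\left(x_{j}\right)\right)-\max\bigl(\log g_{\theta}\left(T\left(x_{j}\right)\right),\log g_{\theta_{0}}\left(T\left(x_{j}\right)\right)\bigr)$. The leading term depends on $x_{j}$ alone and not on the optimization variable $\theta$, so it has no effect on the $\arg\inf$, yielding
\[
\theta^{\ddagger}=\arg\sup_{\theta\in\Theta}\sum_{j=1}^{N}\max\bigl(\log g_{\theta_{0}}\left(T\left(x_{j}\right)\right),\log g_{\theta}\left(T\left(x_{j}\right)\right)\bigr).
\]

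Next I would write down the likelihood of the alternative (wide) model under the hypotheses of the theorem. Its parameter is the pair consisting of $\theta_{\text{alt.}}\in\Theta$ together with the membership vector $\left(\theta_{1},\dots,\theta_{N}\right)$ constrained to lie in $\left\{ \theta_{0},\theta_{\text{alt.}}\right\} ^{N}$, the latter playing the role of a nuisance parameter. By the assumed independence of the $T\left(X_{j}\right)$, the likelihood factorizes as $\prod_{j=1}^{N}g_{\theta_{j}}\left(T\left(x_{j}\right)\right)$, with log-likelihood $\sum_{j=1}^{N}\log g_{\theta_{j}}\left(T\left(x_{j}\right)\right)$. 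For any fixed $\theta_{\text{alt.}}$, each term is maximized over the binary choice $\theta_{j}\in\left\{ \theta_{0},\theta_{\text{alt.}}\right\}$ by taking whichever of $g_{\theta_{0}}\left(T\left(x_{j}\right)\right)$ and $g_{\theta_{\text{alt.}}}\left(T\left(x_{j}\right)\right)$ is larger; hence the profile log-likelihood for $\theta_{\text{alt.}}$ is $\sum_{j=1}^{N}\max\bigl(\log g_{\theta_{0}}\left(T\left(x_{j}\right)\right),\log g_{\theta_{\text{alt.}}}\left(T\left(x_{j}\right)\right)\bigr)$.

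Comparing the two displays shows that the value maximizing the profile log-likelihood for $\theta_{\text{alt.}}$ is precisely $\theta^{\ddagger}$, so under the stated uniqueness the maximum likelihood estimate of $\theta_{\text{alt.}}$ exists and equals $\theta^{\ddagger}$. (The same computation applied to \eqref{eq:exact-parameter}, with the sum taken over $j\ne i$, identifies each $\theta_{i}^{\dagger}$ with the maximum likelihood estimate of $\theta_{\text{alt.}}$ obtained from the $N-1$ features other than $i$, which is the sense in which $\theta^{\ddagger}$ approximates the $\theta_{i}^{\dagger}$ for large $N$.) I expect the only subtle point to be the precise formulation of the wide-model likelihood: one must be explicit that the per-feature null-versus-alternative labels are themselves parameters and that ``the maximum likelihood estimate of $\theta_{\text{alt.}}$'' denotes the $\theta_{\text{alt.}}$-coordinate of the joint maximizer, equivalently the maximizer of the profile likelihood. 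Once that is fixed, the rest is the elementary rearrangement above.
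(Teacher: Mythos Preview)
Your proof is correct and follows essentially the same route as the paper's: both drop the $\theta$-independent maximized-likelihood term from the regret to convert the $\arg\inf$ of the summed minima into $\arg\sup_{\theta}\sum_{j}\max\bigl(\log g_{\theta_{0}}(T(x_{j})),\log g_{\theta}(T(x_{j}))\bigr)$, and then identify this with the joint maximum over $\theta$ and the membership vector $(\theta_{1},\dots,\theta_{N})\in\{\theta_{0},\theta_{\text{alt.}}\}^{N}$. Your version is a bit more explicit about the regret unfolding and the profile-likelihood interpretation, and your parenthetical remark is exactly the paper's Corollary~\ref{cor:exact-code-MLE}.
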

\begin{proof}
Using equation \eqref{eq:approximate-parameter},
\begin{eqnarray*}
\theta^{\ddagger} & = & \arg\inf_{\theta}\sum_{j=1}^{N}\min\left(-\log g_{\theta}\left(T\left(x_{j}\right)\right),\:-\log g_{\theta_{0}}\left(T\left(x_{j}\right)\right)\right)\\
 & = & \arg\sup_{\theta\in\Theta}\sum_{j=1}^{N}\max\left(\log g_{\theta}\left(T\left(x_{j}\right)\right),\:\log g_{\theta_{0}}\left(T\left(x_{j}\right)\right)\right)\\
 & = & \arg\sup_{\theta\in\Theta}\sup_{\boldsymbol{\theta}\in\left\{ \theta_{0},\theta_{\text{alt}}\right\} ^{N}}\,\sum_{j=1}^{N}\log g_{\theta_{j}}\left(T\left(x_{j}\right)\right)\\
 & = & \arg\sup_{\theta\in\Theta}\sup_{\boldsymbol{\theta}\in\left\{ \theta_{0},\theta_{\text{alt}}\right\} ^{N}}\,\prod_{j=1}^{N}g_{\theta_{j}}\left(T\left(x_{j}\right)\right),
\end{eqnarray*}
where $\boldsymbol{\theta}=\left\langle \theta_{1},\dots,\theta_{N}\right\rangle $
and $\left\{ \theta_{0},\theta_{\text{alt}}\right\} ^{N}$ is the
$N$-factor Cartesian product $\left\{ \theta_{0},\theta_{\text{alt}}\right\} \times\cdots\times\left\{ \theta_{0},\theta_{\text{alt}}\right\} $.\end{proof}
\begin{cor}
\label{cor:exact-code-MLE}Under the assumptions of Theorem \ref{thm:approximate-code-MLE},
$i\in\left\{ 1,\dots,N\right\} $, if $\theta_{i}^{\dagger}$ is unique,
then it is the maximum likelihood estimate of $\theta_{\text{alt}}$
on the basis of the outcomes $X_{j}=x_{j}$ for all $j\in\left\{ 1,\dots,N\right\} \backslash\left\{ i\right\} $. \end{cor}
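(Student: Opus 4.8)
The plan is to run the same chain of identities that proves Theorem~\ref{thm:approximate-code-MLE}, but with the sum over $j\in\{1,\dots,N\}$ replaced throughout by the sum over $j\ne i$, and then to read off the result as a joint maximum likelihood estimate computed from the reduced data $\langle T(x_j)\rangle_{j\ne i}$.

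First I would unwind the regrets in \eqref{eq:exact-parameter}. For each $j$ one has $\reg(g_\theta,x_j)=-\log g_\theta(T(x_j))+\log g_{\hat\theta}(T(x_j))$ and $\reg(g_{\theta_0},x_j)=-\log g_{\theta_0}(T(x_j))+\log g_{\hat\theta}(T(x_j))$, where $\hat\theta=\arg\sup_{\theta\in\Theta}g_\theta(T(x_j))$ is the per-feature maximiser. The offset $\log g_{\hat\theta}(T(x_j))$ is common to the two regrets and constant in $\theta$, so it passes outside the minimum and, once summed over $j\ne i$, contributes a term independent of $\theta$ that does not affect the $\arg\inf$. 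Hence \eqref{eq:exact-parameter} collapses to $\theta_i^\dagger=\arg\inf_{\theta\in\Theta}\sum_{j\ne i}\min\bigl(-\log g_\theta(T(x_j)),-\log g_{\theta_0}(T(x_j))\bigr)$, equivalently $\theta_i^\dagger=\arg\sup_{\theta\in\Theta}\sum_{j\ne i}\max\bigl(\log g_\theta(T(x_j)),\log g_{\theta_0}(T(x_j))\bigr)$.

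Next I would interpret the inner maximisation, feature by feature, as a partial supremum over an assignment of each $j\ne i$ to either the null value $\theta_0$ or the (still free) alternative value, exactly as in the theorem: for fixed $\theta$, $\sum_{j\ne i}\max\bigl(\log g_\theta(T(x_j)),\log g_{\theta_0}(T(x_j))\bigr)=\sup_{\boldsymbol\theta}\sum_{j\ne i}\log g_{\theta_j}(T(x_j))$, the supremum over all $\boldsymbol\theta=\langle\theta_j\rangle_{j\ne i}$ with each $\theta_j\in\{\theta_0,\theta\}$. Taking $\sup_{\theta\in\Theta}$ and exponentiating the monotone logarithm, $\theta_i^\dagger$ is the $\theta$-coordinate of $\arg\sup_{\theta\in\Theta}\sup_{\boldsymbol\theta}\prod_{j\ne i}g_{\theta_j}(T(x_j))$. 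By the independence assumption of Theorem~\ref{thm:approximate-code-MLE}, $\prod_{j\ne i}g_{\theta_j}(T(x_j))$ is the likelihood of the reduced data $\langle T(x_j)\rangle_{j\ne i}$ in the model in which each such feature is either null ($\theta_j=\theta_0$) or shares the common alternative value $\theta=\theta_{\text{alt.}}$. Its joint maximiser over $(\theta_{\text{alt.}},\boldsymbol\theta)$ is therefore the maximum likelihood estimate based on $\{X_j=x_j:j\ne i\}$, and---given the uniqueness hypothesis on $\theta_i^\dagger$---its alternative-parameter component equals $\theta_i^\dagger$, which is the assertion of the corollary.

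Since every step above is a transcription of the theorem's proof with $\sum_{j=1}^N$ replaced by $\sum_{j\ne i}$, I do not expect a genuine obstacle. The only place demanding a little care is the bookkeeping: checking that the $\theta$-independent regret offsets genuinely drop out of the $\arg\inf$ (they do, being finite and free of $\theta$ under the standing assumptions), and making explicit that the free parameter $\theta$ of the outer supremum is precisely the alternative parameter $\theta_{\text{alt.}}$ being estimated from the other $N-1$ features.
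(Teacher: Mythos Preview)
Your proposal is correct and follows essentially the same route as the paper: the paper's proof simply observes that the argument of Theorem~\ref{thm:approximate-code-MLE} goes through verbatim once the $i$th term is omitted, arriving at $\theta_i^\dagger=\arg\sup_{\theta\in\Theta}\sup_{\boldsymbol\theta}\prod_{j\ne i}g_{\theta_j}(T(x_j))$, which is exactly what you spell out in detail.
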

\begin{proof}
The claim reduces to that of Theorem \ref{thm:approximate-code-MLE}
because the data are equivalent except for the presence or absence
of the outcome $T\left(X_{i}\right)=T\left(x_{i}\right)$ and because
$\theta_{i}^{\dagger}$ and $\theta^{\ddagger}$ are equivalent, except
for the presence or absence of the term involving that outcome. Thus,
for all $i\in\left\{ 1,\dots,N\right\} $, 
\[
\theta_{i}^{\dagger}=\arg\sup_{\theta\in\Theta}\sup_{\boldsymbol{\theta}\in\left\{ \theta_{0},\theta_{\text{alt}}\right\} ^{N}}\,\prod_{j\ne i}g_{\theta_{j}}\left(T\left(x_{j}\right)\right).
\]

\end{proof}
Theorem \ref{thm:approximation} of the next subsection specifies
sufficient conditions for the convergence of $\theta^{\ddagger}-\theta_{i}^{\dagger}$
to 0 as $N$ increases. 

The coding method of the section entitled {}``Exact codelength''
is \emph{universal} in the sense that it asymptotically compresses
the data as much as the noiseless coding theorem allows for any distribution
in the parametric family (cf. \citet[§3.7]{RefWorks:374} and \citet[§6.5]{RefWorks:375}).
Sufficient conditions for universality are stated in the following
lemma, in which \emph{strong consistency} means almost sure convergence
to a parameter value as $n\rightarrow\infty$ if each $T\left(X_{i}\right)$
is stationary and, at fixed $n$, of a density function in $\left\{ g_{\theta}:\theta\in\Theta\right\} $.
(The dependence of $g_{\theta}$ on $n$ is suppressed.) Such convergence
will be denoted by $\overset{n}{\rightarrow}$.
\begin{lem}[Consistency]
\label{lem:consistency} Suppose that for some $\theta_{0}\in\Theta$
and $\talt\in\Theta$ such that $\theta_{0}\ne\talt$ and that for
all $j\in\left\{ 1,\dots,N\right\} $, each statistic $T\left(X_{j}\right)$
has probability density $g_{\theta_{j}}$ with $\theta_{j}\in\left\{ \theta_{0},\talt\right\} $
such that $\theta_{j}=\talt$ for at least two values of $j$ in $\left\{ 1,\dots,N\right\} $.
Suppose further that $g_{\bullet}\left(T\left(X_{j}\right)\right)$
is almost surely continuous on $\Theta$ for all $j\in\left\{ 1,\dots,N\right\} $.
If, for some $i\in\left\{ 1,\dots,N\right\} $, $\theta_{i}^{\dagger}$
is unique and $\hat{\theta}_{j}=\arg\sup_{\theta\in\Theta}g_{\theta}\left(T\left(X_{j}\right)\right)$
is a strongly consistent estimate of $\theta_{j}$ for all $j\in\left\{ 1,\dots,N\right\} \backslash\left\{ i\right\} $,
then $\theta_{i}^{\dagger}$ is a strongly consistent estimate of
$\talt$.\end{lem}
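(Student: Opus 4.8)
The plan is to leverage Corollary \ref{cor:exact-code-MLE}, which identifies $\theta_{i}^{\dagger}$ as the maximum likelihood estimate of $\theta_{\text{alt.}}$ based on the outcomes $T\left(X_{j}\right)=T\left(x_{j}\right)$ for $j\ne i$, and then to show that this particular MLE is strongly consistent. By the corollary, we may write $\theta_{i}^{\dagger}=\arg\sup_{\theta\in\Theta}\sup_{\boldsymbol{\theta}}\prod_{j\ne i}g_{\theta_{j}}\left(T\left(x_{j}\right)\right)$, where the inner supremum is over $\boldsymbol{\theta}\in\left\{\theta_{0},\theta_{\text{alt.}}\right\}^{N}$. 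First I would observe that the inner supremum is attained termwise: for each $j\ne i$, the $j$th factor is maximized by $g_{\hat{\theta}_{j}}\left(T\left(x_{j}\right)\right)$ with $\hat{\theta}_{j}=\arg\sup_{\theta\in\Theta}g_{\theta}\left(T\left(X_{j}\right)\right)$ if $\hat{\theta}_{j}\in\left\{\theta_{0},\theta_{\text{alt.}}\right\}$, and otherwise by whichever of $g_{\theta_{0}},g_{\theta_{\text{alt.}}}$ is larger. Crucially, the only factor in which the outer variable $\theta$ is free to range over all of $\Theta$ is the one assigned to $\theta$ directly; more carefully, rewriting the double supremum shows $\theta_{i}^{\dagger}$ maximizes $\sum_{j\ne i}\max\left(\log g_{\theta}\left(T\left(x_{j}\right)\right),\log g_{\theta_{0}}\left(T\left(x_{j}\right)\right)\right)$ over $\theta\in\Theta$, i.e., it solves the objective in \eqref{eq:exact-parameter}.

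Next I would analyze this objective asymptotically. Fix the index $i$. Among $j\ne i$, by hypothesis there are at least two features (hence at least one besides $i$, and in fact we need the count carefully — at least two total means at least one among $j\ne i$, but to pin down $\theta_{\text{alt.}}$ we want the alternative features to contribute; the hypothesis ``at least two'' is exactly what guarantees $j\ne i$ still contains an alternative feature regardless of which $i$ was chosen) with $\theta_{j}=\theta_{\text{alt.}}$. For each such alternative feature $j$, strong consistency of $\hat{\theta}_{j}$ gives $\hat{\theta}_{j}\overset{n}{\rightarrow}\theta_{\text{alt.}}$ almost surely, so for large $n$ the term $\max\left(\log g_{\theta}\left(T\left(x_{j}\right)\right),\log g_{\theta_{0}}\left(T\left(x_{j}\right)\right)\right)$ is, on the event that this maximum is achieved by the first argument near $\theta=\theta_{\text{alt.}}$, a function peaked at $\theta_{\text{alt.}}$. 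For null features $j$ (those with $\theta_{j}=\theta_{0}$), $\hat{\theta}_{j}\overset{n}{\rightarrow}\theta_{0}$, and the corresponding term $\max\left(\log g_{\theta}\left(T\left(x_{j}\right)\right),\log g_{\theta_{0}}\left(T\left(x_{j}\right)\right)\right)$ is at least $\log g_{\theta_{0}}\left(T\left(x_{j}\right)\right)$, which is constant in $\theta$; so null features contribute a floor that does not penalize any particular $\theta$ and, in the limit, do not discriminate. Hence the only terms that shape the location of the maximizer as $n\rightarrow\infty$ are the alternative ones, each driving the argmax toward $\theta_{\text{alt.}}$.

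To make this rigorous I would use the almost sure continuity of $\theta\mapsto g_{\theta}\left(T\left(X_{j}\right)\right)$ on $\Theta$ together with the uniqueness of $\theta_{i}^{\dagger}$: uniqueness rules out oscillation of the argmax, so a standard argmax-continuity / M-estimation argument (e.g., the objective converges, the limiting objective is uniquely maximized at $\theta_{\text{alt.}}$ because each alternative term is, and the penalty structure of the $\min$ only ever helps $\theta_{\text{alt.}}$) yields $\theta_{i}^{\dagger}\overset{n}{\rightarrow}\theta_{\text{alt.}}$ almost surely. The main obstacle I anticipate is handling the $\max$ (equivalently $\min$ of codelengths) cleanly: one must verify that for an alternative feature $j$, once $\hat{\theta}_{j}$ is close to $\theta_{\text{alt.}}\ne\theta_{0}$, the term genuinely behaves like $\log g_{\theta}\left(T\left(x_{j}\right)\right)$ in a neighborhood of $\theta_{\text{alt.}}$ — i.e., that $g_{\theta_{\text{alt.}}}\left(T\left(x_{j}\right)\right)>g_{\theta_{0}}\left(T\left(x_{j}\right)\right)$ eventually on that feature, so the truncation by $g_{\theta_{0}}$ is inactive near the true alternative value — and that the null features' constant floor cannot create a spurious competing maximizer. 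Controlling this uniformly in $\theta$ near $\theta_{\text{alt.}}$, rather than pointwise, is where the continuity hypothesis and the explicit uniqueness assumption do the real work; the rest is a routine invocation of the argmax theorem.
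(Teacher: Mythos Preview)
Your proposal is correct and follows essentially the same route as the paper: partition the indices $j\ne i$ into null and alternative features, use the strong consistency of each $\hat{\theta}_{j}$ together with the almost-sure continuity of $\theta\mapsto g_{\theta}\left(T\left(X_{j}\right)\right)$ to see that, in the $n\to\infty$ limit, the null features contribute a $\theta$-constant floor while the alternative features force the maximizer of $\sum_{j\ne i}\max\left(\log g_{\theta}\left(T\left(x_{j}\right)\right),\log g_{\theta_{0}}\left(T\left(x_{j}\right)\right)\right)$ to $\theta_{\text{alt.}}$, and then conclude via the uniqueness hypothesis. The paper's argument is terser---a chain of asymptotic equalities in place of your explicit argmax/M-estimation framing---but the substance is identical, and your remark that ``at least two'' alternative features guarantees $\mathfrak{J}=\left\{j\ne i:\theta_{j}=\theta_{\text{alt.}}\right\}$ is nonempty is exactly the observation the paper opens with.
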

\begin{proof}
Let $\mathfrak{J}=\left\{ j:\theta_{j}=\theta_{\text{alt}},j\in\left\{ 1,\dots,N\right\} \backslash\left\{ i\right\} \right\} $,
which by assumption is nonempty. By the consistency condition, $\hat{\theta}_{j}\overset{n}{\rightarrow}\theta_{\text{alt}}$
for all $j\in\mathfrak{J}$ and $\hat{\theta}_{j}\overset{n}{\rightarrow}\theta_{0}$
for all $j\in\left\{ 1,\dots,N\right\} \backslash\mathfrak{J}$. Thus,
with probability 1,
\begin{eqnarray*}
\prod_{j\ne i}g_{\theta_{j}}\left(T\left(X_{j}\right)\right) & = & \prod_{j\in\mathfrak{J}}g_{\theta_{\text{alt}}}\left(T\left(X_{j}\right)\right)\prod_{j\notin\mathfrak{J}\cup\left\{ i\right\} }g_{\theta_{0}}\left(T\left(X_{j}\right)\right)\\
 & = & \prod_{j\ne i}g_{\hat{\theta}_{j}}\left(T\left(X_{j}\right)\right)\\
 & = & \prod_{j\ne i}\max\left(g_{\theta_{\text{alt}}}\left(T\left(X_{j}\right)\right),g_{\theta_{0}}\left(T\left(X_{j}\right)\right)\right)\\
 & = & \sup_{\theta\in\Theta}\prod_{j\ne i}\max\left(g_{\theta}\left(T\left(X_{j}\right)\right),g_{\theta_{0}}\left(T\left(X_{j}\right)\right)\right)
\end{eqnarray*}
in the limit as $n\rightarrow\infty$, with the equalities holding
by the almost-sure continuity of $g_{\bullet}\left(T\left(X_{j}\right)\right)$
as a function on $\Theta$ \citep[§1.7]{Serfling:1254212}. Because
by equation \eqref{eq:exact-parameter}, 
\[
\theta_{i}^{\dagger}=\arg\sup_{\theta\in\Theta}\sum_{j\ne i}\max\left(g_{\theta}\left(T\left(X_{j}\right)\right),g_{\theta_{0}}\left(T\left(X_{j}\right)\right)\right),
\]
it follows that $\theta_{i}^{\dagger}\overset{n}{\rightarrow}\theta_{i}$. 
\end{proof}
Heuristically, the key observation of the proof is that whether $\theta$
is constrained to have one of the two values has no asymptotic effect
on the estimates of $\theta_{j}$. The universality of the codelength
function is a consequence.\begin{thm}[Universality]
\label{thm:universality} Under the conditions of Lemma \ref{lem:consistency},
\[
\lim_{n\rightarrow\infty}E_{\talt}\left(\frac{L_{i}^{\dagger}\left(T\left(X_{i}\right)\right)}{n}\right)=\lim_{n\rightarrow\infty}E_{\talt}\left(\frac{-\log g_{\theta_{\text{alt}}}\left(T\left(X_{i}\right)\right)}{n}\right)
\]
for all $i\in\left\{ 1,\dots,N\right\} $ such that $\theta_{i}=\talt$,
where $E_{\talt}$ signifies the expectation value with respect to
$g_{\talt}$, i.e., $E_{\talt}\left(\bullet\right)=\int\bullet dP_{\talt}$.\end{thm}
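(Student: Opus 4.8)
The plan is to combine the strong consistency of $\theta_i^\dagger$ established in Lemma~\ref{lem:consistency} with the fact that, by equation~\eqref{eq:exact-parameter}, $\theta_i^\dagger$ is a function of $\{T(X_j):j\ne i\}$ alone and hence is independent of $T(X_i)$. Fix $i$ with $\theta_i=\theta_{\text{alt.}}$ and write $L_i^\dagger(T(X_i))=-\log g_{\theta_i^\dagger}(T(X_i))$. Conditioning on the $\sigma$-algebra $\mathcal{F}_{-i}$ generated by the other features and using independence, $E_{\theta_{\text{alt.}}}\bigl[-\log g_{\theta_i^\dagger}(T(X_i))\mid\mathcal{F}_{-i}\bigr]$ equals the cross-entropy $\int\bigl(-\log g_{\theta}(t)\bigr)g_{\theta_{\text{alt.}}}(t)\,dt$ evaluated at $\theta=\theta_i^\dagger$, that is,
\[
E_{\theta_{\text{alt.}}}\bigl[-\log g_{\theta_{\text{alt.}}}(T(X_i))\bigr]+D_n(\theta_i^\dagger),
\]
where $D_n(\theta)=\int g_{\theta_{\text{alt.}}}(t)\log\bigl(g_{\theta_{\text{alt.}}}(t)/g_{\theta}(t)\bigr)\,dt\ge 0$ is the Kullback--Leibler divergence of $g_{\theta_{\text{alt.}}}$ from $g_{\theta}$ at sample size $n$ and $D_n(\theta_{\text{alt.}})=0$. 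Taking the full expectation of both sides (tower property) and dividing by $n$, the difference of the two normalized expectations in the theorem equals $n^{-1}E\bigl[D_n(\theta_i^\dagger)\bigr]$, so it suffices to show $n^{-1}E\bigl[D_n(\theta_i^\dagger)\bigr]\to 0$.

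To dispatch this term I would argue, using the stationarity hypothesis, that the divergence rate $d(\theta)=\lim_n n^{-1}D_n(\theta)$ exists, is continuous at $\theta_{\text{alt.}}$, and equals $0$ there. Combined with $\theta_i^\dagger\overset{n}{\rightarrow}\theta_{\text{alt.}}$ almost surely (Lemma~\ref{lem:consistency}) and the almost-sure continuity of $g_\bullet(T(X_j))$ that Lemma~\ref{lem:consistency} already assumes, this yields $n^{-1}D_n(\theta_i^\dagger)\to 0$ almost surely. It then remains to move the limit through the expectation, and this is where the real work lies: strong consistency of $\theta_i^\dagger$ alone does not control the event $g_{\theta_i^\dagger}(T(X_i))\approx 0$, on which $-\log g_{\theta_i^\dagger}(T(X_i))$ can be arbitrarily large, so one needs uniform integrability of $\{n^{-1}D_n(\theta_i^\dagger)\}_n$. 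That in turn follows once $\theta_i^\dagger$ is known to lie, with probability tending to one, in a fixed compact neighbourhood $K$ of $\theta_{\text{alt.}}$ on which $\sup_{\theta\in K}n^{-1}D_n(\theta)$ is bounded uniformly in $n$ --- a domination/integrability input of the kind used in Appendix~A and a mild strengthening of the bare consistency hypothesis. Granting it, $\lim_n n^{-1}E[D_n(\theta_i^\dagger)]=E[\lim_n n^{-1}D_n(\theta_i^\dagger)]=0$, which is the claim.

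I expect this interchange of limit and expectation for the normalized divergence to be the one genuine obstacle; the rest is bookkeeping that rests entirely on the independence of $\theta_i^\dagger$ and $T(X_i)$. An alternative that bypasses the conditioning step is to show directly that $n^{-1}L_i^\dagger(T(X_i))$ and $n^{-1}\bigl(-\log g_{\theta_{\text{alt.}}}(T(X_i))\bigr)$ share the same almost-sure limit --- a Shannon--McMillan--Breiman argument made uniform over $\theta$ in a neighbourhood of $\theta_{\text{alt.}}$, so that substituting the consistent $\theta_i^\dagger$ changes nothing --- and then integrate; this needs the same uniform-integrability hypothesis, so the two routes stand or fall together.
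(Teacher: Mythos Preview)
Your proposal is correct and, if anything, more scrupulous than the paper's own argument. Both proofs rest on the same two ingredients---the strong consistency $\theta_i^\dagger\overset{n}{\rightarrow}\theta_{\text{alt.}}$ from Lemma~\ref{lem:consistency} and the almost-sure continuity of $\theta\mapsto g_\theta(T(X_i))$---but they are assembled differently. The paper does not condition on $\mathcal{F}_{-i}$ at all: it argues directly that $g_{\theta_{\text{alt.}}}(T(X_i))/g_{\theta_i^\dagger}(T(X_i))\overset{n}{\rightarrow}1$ almost surely by continuity, and then simply asserts that the expectation of $n^{-1}$ times the log of this ratio tends to zero, without discussing the interchange of limit and expectation. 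Your conditioning step instead converts the random codelength into a cross-entropy evaluated at the random point $\theta_i^\dagger$, so that the discrepancy becomes the nonnegative quantity $n^{-1}E\bigl[D_n(\theta_i^\dagger)\bigr]$; this buys you a cleaner target (the only residual randomness is in $\theta_i^\dagger$, and nonnegativity gives one inequality via Fatou for free). The uniform-integrability obstacle you isolate is genuine and is precisely what the paper glosses over; the extra domination hypothesis you propose is therefore not a weakness of your route but an honest statement of what the theorem, under the hypotheses as written, actually needs.
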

\begin{proof}
$P_{\theta_{\text{alt}}}\left(\lim_{n\rightarrow\infty}\theta_{i}^{\dagger}\in\left\{ \theta_{0},\theta_{\text{alt}}\right\} \right)=1$
for all $i\in\left\{ 1,\dots,N\right\} $ because $\theta_{i}^{\dagger}\overset{n}{\rightarrow}\theta_{i}$
by the lemma and $\theta_{i}\in\left\{ \theta_{0},\theta_{\text{alt}}\right\} $
by assumption. Hence, $\theta_{i}^{\dagger}\overset{n}{\rightarrow}\talt$
for all $i\in\left\{ 1,\dots,N\right\} $ such that $\theta_{i}=\theta_{\text{alt}}$.
Thus, for those values of $i$, 
\[
\lim_{n\rightarrow\infty}E_{\talt}\left(\frac{-\log\left(g_{\talt}\left(T\left(X_{i}\right)\right)/g_{\theta_{i}^{\dagger}}\left(T\left(X_{i}\right)\right)\right)}{n}\right)=0
\]
because $g_{\talt}\left(T\left(X_{i}\right)\right)/g_{\theta_{i}^{\dagger}}\left(T\left(X_{i}\right)\right)\overset{n}{\rightarrow}1$
by the almost-sure continuity of $g_{\bullet}\left(T\left(X_{i}\right)\right)$
as a function on $\Theta$ \citep[§1.7]{Serfling:1254212}. 
\end{proof}
The $N\rightarrow\infty$ universally of a related mixture code will
be established in Appendix B.

\subsection*{Asymptotic characteristics of\textmd{\normalsize{} $\theta^{\ddagger}$
and $\theta_{i}^{\dagger}$}}

Assume $X_{1},X_{2},\dots$ are independent and each of identical
distribution $P_{\star}$. For example, $P_{\star}$ could be a \emph{K}-component
mixture distribution $P_{\star}=\sum_{k=1}^{K}\pi_{k}P_{\star k},$
where $\pi_{k}$ is the probability that some $X_{j}$ has distribution
$P_{\star k}$, which is not necessarily in $\left\{ P_{\theta,\lambda}:\theta\in\Theta,\lambda\in\Lambda\right\} $.
Let $E_{\star}\left(\bullet\right)$ and $\overset{N}{\rightarrow}$
denote the expectation value and almost-sure convergence as $N\rightarrow\infty$
with respect to $P_{\star}$.
\begin{thm}
\label{thm:approximation}\emph{Suppose that}\textup{\emph{,}}\emph{
for all} $i\in\left\{ 1,\dots,N\right\} $\textup{, $E_{\star}\left(\log g_{\theta}\left(T\left(X_{j}\right)\right)\right)<\infty$
for all $\theta\in\Theta$ and that $\theta^{\ddagger}$ and $\theta_{i}^{\dagger}$
are unique with }$P_{\star}$\textup{-probability 1.} \emph{Then}
$\theta^{\ddagger}-\theta_{i}^{\dagger}\overset{N}{\rightarrow}0$
for all $i\in\left\{ 1,\dots,N\right\} $.\end{thm}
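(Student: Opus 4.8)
The plan is to exhibit both $\theta^{\ddagger}$ and $\theta_{i}^{\dagger}$ as extremum estimators built from the \emph{same} per-feature criterion and then to run a Wald-type consistency argument: each converges almost surely to the maximizer of one common population criterion, so their difference must vanish.

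First I would record the maximization form of the two estimators already derived in the proofs of Theorem~\ref{thm:approximate-code-MLE} and Corollary~\ref{cor:exact-code-MLE}. Setting $m_{\theta}(t)=\max\!\left(\log g_{\theta}(t),\log g_{\theta_{0}}(t)\right)$ and $S_{N}(\theta)=\sum_{j=1}^{N}m_{\theta}\!\left(T(x_{j})\right)$, the fact that the $-\log g_{\hat{\theta}_{j}}\!\left(T(x_{j})\right)$ terms in the regret are constant in $\theta$ gives $\theta^{\ddagger}=\arg\sup_{\theta\in\Theta}S_{N}(\theta)$ and $\theta_{i}^{\dagger}=\arg\sup_{\theta\in\Theta}\bigl(S_{N}(\theta)-m_{\theta}\!\left(T(x_{i})\right)\bigr)$, the latter by discarding the single $j=i$ term. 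I would also note that $m_{\bullet}(t)$ is continuous on $\Theta$ whenever $g_{\bullet}\!\left(T(x_{j})\right)$ is, being a finite maximum of continuous functions, matching the continuity hypothesis carried over from Lemma~\ref{lem:consistency}.

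Next I would introduce the population criterion $M(\theta)=E_{\star}\!\left(m_{\theta}\!\left(T(X_{1})\right)\right)$, finite under the stated condition $E_{\star}\!\left(\log g_{\theta}\!\left(T(X_{j})\right)\right)<\infty$ (supplemented, where the left tail of $\log g_{\theta_{0}}$ under $P_{\star}$ needs it, by a matching lower-tail integrability bound so that $M>-\infty$). Since $X_{1},X_{2},\dots$ are i.i.d.\ $P_{\star}$, the strong law of large numbers yields $N^{-1}S_{N}(\theta)\overset{N}{\rightarrow}M(\theta)$ for each fixed $\theta$. The step that does the real work is promoting this to the uniform statement $\sup_{\theta\in\Theta}\bigl|N^{-1}S_{N}(\theta)-M(\theta)\bigr|\overset{N}{\rightarrow}0$, which I would obtain from a uniform strong law under mild regularity --- compactness of $\Theta$ (or a coercivity condition preventing escape to the boundary, relevant since e.g.\ $\Theta=[0,\infty)$ in Example~\ref{exa:absT-stat}), continuity of $\theta\mapsto m_{\theta}(t)$, and an $E_{\star}$-integrable envelope. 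The perturbed criterion $N^{-1}\bigl(S_{N}(\theta)-m_{\theta}\!\left(T(x_{i})\right)\bigr)$ differs from $N^{-1}S_{N}(\theta)$ by the single \emph{fixed} function $N^{-1}m_{\bullet}\!\left(T(x_{i})\right)$, hence converges uniformly to the same $M$ almost surely.

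Finally, under the identifiability that $M$ has a unique maximizer $\theta^{\star}$ --- the population counterpart of the assumed almost-sure uniqueness of $\theta^{\ddagger}$ and $\theta_{i}^{\dagger}$ --- the standard argmax-consistency theorem (uniform convergence of a continuous criterion with a well-separated maximum; cf.\ \citep[Sec.~1.7]{Serfling:1254212}) gives $\theta^{\ddagger}\overset{N}{\rightarrow}\theta^{\star}$ and, applied verbatim to the perturbed criterion, $\theta_{i}^{\dagger}\overset{N}{\rightarrow}\theta^{\star}$ for each fixed $i$; subtracting yields $\theta^{\ddagger}-\theta_{i}^{\dagger}\overset{N}{\rightarrow}0$. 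The main obstacle I anticipate is precisely this last block: a pointwise SLLN together with mere almost-sure uniqueness of the finite-$N$ optimizers does not pin down a single limit point, so the clean conclusion genuinely needs both the uniform law and well-separatedness of the population maximum, which I would either add as hypotheses or derive from compactness, continuity, and identifiability of $\left\{g_{\theta}:\theta\in\Theta\right\}$.
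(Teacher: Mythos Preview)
Your approach is essentially identical to the paper's: both recast $\theta^{\ddagger}$ and $\theta_{i}^{\dagger}$ as M-estimators for the per-feature criterion $m_{\theta}(t)=\max\!\left(\log g_{\theta}(t),\log g_{\theta_{0}}(t)\right)$ (the paper writes this as $\log g_{\hat{\theta}_{j}(\theta)}\!\left(T(x_{j})\right)$ with $\hat{\theta}_{j}(\theta)=\arg\max_{\tilde{\theta}\in\{\theta_{0},\theta\}}g_{\tilde{\theta}}\!\left(T(x_{j})\right)$), apply the strong law of large numbers to the i.i.d.\ summands over the full index set and over the leave-one-out set, observe that both averages share the same limit $M(\theta)=E_{\star}\!\left(m_{\theta}(T(X_{1}))\right)$, and conclude that the two arg-sups coincide asymptotically. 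The one substantive difference is rigor: the paper's proof passes directly from the pointwise SLLN (``as the result holds for arbitrary $\theta\in\Theta$'') to convergence of the arg-sup, without invoking a uniform law or well-separatedness of the population maximum --- precisely the obstacle you flag in your final paragraph. Your proposal to supply compactness, an integrable envelope, and identifiability of the population maximizer is therefore a tightening of the paper's own argument rather than a departure from its strategy; the paper simply leaves that step implicit.
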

\begin{proof}
For any $\theta\in\Theta$, let $\hat{\theta}_{j}\left(\theta\right)=\arg\max_{\widetilde{\theta}\in\left\{ \theta_{0},\theta\right\} }g_{\widetilde{\theta}}\left(T\left(x_{j}\right)\right)$.
Because $\log g_{\hat{\theta}_{j}\left(\theta\right)}\left(T\left(X_{j}\right)\right)$
is IID for all $j\in\left\{ 1,\dots,N\right\} $, the strong law of
large numbers implies that, for all $\mathcal{J}_{N}\in\left\{ \left\{ 1,\dots,N\right\} ,\left\{ 1,\dots,N\right\} \backslash\left\{ 1\right\} ,\dots,\left\{ 1,\dots,N\right\} \backslash\left\{ N\right\} \right\} $,
\[
\frac{1}{\left|\mathcal{J}_{N}\right|}\sum_{j\in\mathcal{J}_{N}}\log g_{\hat{\theta}_{j}\left(\theta\right)}\left(T\left(X_{j}\right)\right)\overset{N}{\rightarrow}E_{\star}\left(\log g_{\hat{\theta}_{j}\left(\theta\right)}\left(T\left(X_{j}\right)\right)\right)
\]
\[
=P_{\star}\left(\hat{\theta}_{j}\left(\theta\right)=\theta_{0}\right)E_{\star}\left(\log g_{\hat{\theta}_{j}\left(\theta\right)}\left(T\left(X_{j}\right)\right)\vert\hat{\theta}_{j}\left(\theta\right)=\theta_{0}\right)
\]
\[
+P_{\star}\left(\hat{\theta}_{j}\left(\theta\right)=\theta\right)E_{\star}\left(\log g_{\hat{\theta}_{j}\left(\theta\right)}\left(T\left(X_{j}\right)\right)\vert\hat{\theta}_{j}\left(\theta\right)=\theta\right),
\]
the finiteness of which follows from that of $E_{\star}\left(\log g_{\theta}\left(T\left(X_{j}\right)\right)\right)$.
 As the result holds for arbitrary $\theta\in\Theta$, 
\[
\arg\sup_{\theta\in\Theta}\frac{1}{\left|\mathcal{J}_{N}\right|}\sum_{j\in\mathcal{J}_{N}}\log g_{\hat{\theta}_{j}\left(\theta\right)}\left(T\left(X_{j}\right)\right)\overset{N}{\rightarrow}
\]
\[
\arg\sup_{\theta\in\Theta}E_{\star}\left(\log g_{\hat{\theta}_{i}\left(\theta\right)}\left(T\left(X_{j}\right)\right)\right)
\]
irrespective of whether the sum on the left-hand-side is over $\left\{ 1,\dots,N\right\} $
or over $\left\{ 1,\dots,N\right\} \backslash\left\{ i\right\} $
for some $i\in\left\{ 1,\dots,N\right\} $. (The uniqueness of the
maximizing value of $\theta$ on the left-hand-side is guaranteed
by the postulated uniqueness of $\theta^{\ddagger}$ and $\theta_{i}^{\dagger}$.)
Therefore,  the difference in the maximum likelihood estimate of
$\theta$ under the alternative hypothesis using $X_{1},\dots,X_{N}$
and that using $X_{1},\dots,X_{i-1},X_{i+1},\dots,X_{N}$ converges
almost surely to 0; however, such maximum likelihood estimates are
$\theta^{\ddagger}$ and $\theta_{i}^{\dagger}$, respectively, according
to Theorem \ref{thm:approximate-code-MLE} and Corollary \ref{cor:exact-code-MLE}. 
\end{proof}

\section*{Appendix B: Asymptotic characteristics of MDL and L0O}

This section extends the fixed-component results of Appendix A to
the two-component mixture density of equation \eqref{eq:mixture-density}
with the constraint that $g_{\text{alt}}=g_{\theta_{\text{alt}}}$
for some $\theta_{\text{alt}}\in\Theta$. In this setting, the universal
density $g_{i}^{\dagger}$ and its approximation $g^{\ddagger}$ are
replaced with $g_{i}^{\mdl}=g_{\theta_{i}^{\mdl}}$ and its approximation
$g^{\loo}=g_{\theta^{\loo}}$, where $\left\langle \theta_{i}^{\mdl},\pi_{0i}^{\mdl}\right\rangle $
are given by equation \eqref{eq:maxlik-MDL}. (\citet{GWAselect}
compared the performance of $g^{\ddagger}$ and $g^{\loo}$ by simulation.)

Assuming the statistics are independent, $\left\langle \theta_{i}^{\mdl},\pi_{0i}^{\mdl}\right\rangle $
and $\left\langle \theta^{\loo},\pi_{0}^{\loo}\right\rangle $ are
clearly maximum likelihood estimates of $\left\langle \theta_{\text{alt}},\pi_{0}\right\rangle $.
Consequently, the steps used to prove Theorem \ref{thm:approximation}
also demonstrate that $\theta^{\loo}-\theta_{i}^{\mdl}\overset{N}{\rightarrow}0$
and $\pi_{0}^{\loo}-\pi_{0i}^{\mdl}\overset{N}{\rightarrow}0$ for
all $i\in\left\{ 1,\dots,N\right\} $ under the independence condition.
The mixture codes form LFDR estimates via substituting either $\theta^{\mdl}$
and $\pi_{0}^{\mdl}$ or $\theta^{\loo}$ and $\pi_{0}^{\loo}$ into
equations \eqref{eq:LFDR} and \eqref{eq:mixture-density}. 

Whereas regularity conditions entailing the strong consistency of
maximum likelihood estimates for finite-mixture models \citep{RednerWalker1984b}
would apply as $N\rightarrow\infty$, seemingly more pertinent to
universality is consistency in the sense of $\overset{n}{\rightarrow}$,
which is almost-sure convergence as $n\rightarrow\infty$ under the
stationarity of every $T\left(X_{i}\right)$. However, such $\overset{n}{\rightarrow}$
consistency does not hold if $N$ is finite and if $\pi_{0}>0$, for
in that case, there is fixed, nonzero probability $\pi_{0}^{N}$ that
all $N$ statistics have probability density function $g_{\theta_{0}}$
rather than $g_{\theta_{\text{alt}}}$. Therefore, $\overset{N}{\rightarrow}$
consistency will be used instead.\begin{thm}
\label{thm:universality-asymptotic-1}If the maximum likelihood estimate
$\theta^{\loo}$ almost surely converges to $\theta_{\text{alt}}$
as $N\rightarrow\infty$ and if\textup{ }$g_{\bullet}\left(T\left(X_{i}\right)\right)$
is almost surely continuous on $\Theta$ for all $i\in\left\{ 1,2,\dots\right\} $,
then 
\[
\lim_{N\rightarrow\infty}E_{\talt}\left(L_{i}^{\ast}\left(T\left(X_{i}\right)\right)/n\right)=E_{\talt}\left(-\log g_{\talt}\left(T\left(X_{i}\right)\right)/n\right)
\]
for all $i\in\left\{ 1,2,\dots\right\} $ such that $\theta_{i}=\talt$,
where $L_{i}^{\ast}\left(T\left(X_{i}\right)\right)=-\log g_{\theta_{i}^{\mdl}}\left(T\left(X_{j}\right)\right)$
and $E_{\talt}$ signifies the expectation value with respect to $g_{\talt}$,
i.e., $E_{\talt}\left(\bullet\right)=\int\bullet dP_{\talt}$.\end{thm}
\begin{proof}
Since $\theta_{i}^{\mdl}$ is the maximum likelihood estimate for
the $N-1$ statistics other than $T\left(X_{i}\right)$, $\theta_{i}^{\mdl}\overset{N}{\rightarrow}\theta_{\text{alt}}.$
Thus, the claim follows from reasoning analogous to that used to prove
Theorem \ref{thm:universality}.\end{proof}
\begin{cor}[Asymptotic universality]
Given the conditions of Theorem \ref{thm:universality-asymptotic-1},
\[
\lim_{n\rightarrow\infty}\lim_{N\rightarrow\infty}E_{\talt}\left(\frac{L_{i}^{\ast}\left(T\left(X_{i}\right)\right)}{n}\right)
\]
\[
=\lim_{n\rightarrow\infty}E_{\talt}\left(\frac{-\log g_{\talt}\left(T\left(X_{j}\right)\right)}{n}\right)
\]
for all $i\in\left\{ 1,2,\dots\right\} $ such that $\theta_{i}=\talt$.
\end{cor}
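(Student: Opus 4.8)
The plan is to obtain this corollary as an immediate consequence of Theorem \ref{thm:universality-asymptotic} by appending the outer limit $n\rightarrow\infty$ to the equality established there. First I would fix $n$ and an index $i$ with $\theta_i=\theta_{\text{alt.}}$, and invoke Theorem \ref{thm:universality-asymptotic} under its stated hypotheses (the almost sure convergence $\theta^{\ast\ast}\overset{N}{\rightarrow}\theta_{\text{alt.}}$ and the almost sure continuity of $g_\bullet\left(T\left(X_i\right)\right)$ on $\Theta$), which gives
\[
\lim_{N\rightarrow\infty}E_{\theta_{\text{alt.}}}\!\left(\frac{L_i^{\ast}\left(T\left(X_i\right)\right)}{n}\right)=E_{\theta_{\text{alt.}}}\!\left(\frac{-\log g_{\theta_{\text{alt.}}}\left(T\left(X_i\right)\right)}{n}\right).
\]
Thus the inner limit appearing on the left-hand side of the corollary exists for every fixed $n$ and, at that $n$, equals the quantity on the right (recalling that the dependence of $g_\theta$ on $n$ is suppressed throughout).

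Second I would take $\lim_{n\rightarrow\infty}$ of both sides of this identity. Since the value of the inner $N$-limit has been identified with $E_{\theta_{\text{alt.}}}\left(-\log g_{\theta_{\text{alt.}}}\left(T\left(X_i\right)\right)/n\right)$, the iterated limit $\lim_{n\rightarrow\infty}\lim_{N\rightarrow\infty}E_{\theta_{\text{alt.}}}\left(L_i^{\ast}\left(T\left(X_i\right)\right)/n\right)$ is, by definition, $\lim_{n\rightarrow\infty}E_{\theta_{\text{alt.}}}\left(-\log g_{\theta_{\text{alt.}}}\left(T\left(X_i\right)\right)/n\right)$; writing the index as $j$ on the right, as in the statement, is harmless because $T\left(X_i\right)$ and $T\left(X_j\right)$ are identically distributed under $g_{\theta_{\text{alt.}}}$ for the relevant indices. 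The finiteness of this outer limit — the Shannon entropy rate of the stationary source $\left\{T\left(X_i\right)\right\}$ governed by $g_{\theta_{\text{alt.}}}$ — is inherited from the integrability hypotheses carried over from Theorem \ref{thm:universality-asymptotic} and its antecedents in Appendix A.

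There is essentially no obstacle beyond this bookkeeping: the substantive content — the $N\rightarrow\infty$ consistency of the mixture-code maximum likelihood estimate and the consequent coincidence of expected codelengths at fixed $n$ — has already been supplied by Theorem \ref{thm:universality-asymptotic}, which in turn rests on Theorem \ref{thm:universality} and the law-of-large-numbers argument of Appendix A. The single point that warrants a sentence of care is the order of limits: the corollary asserts only the iterated limit $\lim_{n\rightarrow\infty}\lim_{N\rightarrow\infty}$, which is precisely the order in which the argument proceeds, and makes no claim about the reversed iterated limit $\lim_{N\rightarrow\infty}\lim_{n\rightarrow\infty}$, so no interchange-of-limits (Moore--Osgood-type) theorem is required. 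Hence the proof reduces to recording the equality of Theorem \ref{thm:universality-asymptotic} and passing to the limit in $n$.
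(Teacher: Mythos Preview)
Your proposal is correct and matches the paper's own approach: the paper records only that ``the proof is trivial,'' and your argument is precisely the intended triviality---apply Theorem~\ref{thm:universality-asymptotic} at fixed $n$ and then take $\lim_{n\rightarrow\infty}$ of both sides, noting that the corollary asserts only the iterated limit in that order so no interchange is needed.
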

The proof is trivial. The corollary means that 
\[
\left(L_{i}^{\ast}\left(T\left(x_{i}\right)\right)-\log\left(1-\pi_{0i}^{\mdl}\right)\right)-\left(L^{0}\left(T\left(x_{i}\right)\right)-\log\pi_{0i}^{\mdl}\right)
\]
may be regarded as approaching the information for discrimination
under the mixture model as $N\rightarrow\infty$. Since $\theta^{\loo}-\theta_{i}^{\mdl}\overset{N}{\rightarrow}0$
and $\pi_{0}^{\loo}-\pi_{0i}^{\mdl}\overset{N}{\rightarrow}0$, that
information is approximated by substituting the maximum likelihood
estimates $\theta^{\loo}$ and $\pi_{0}^{\loo}$ for $\theta_{i}^{\mdl}$
and $\pi_{0i}^{\mdl}$, respectively.

\bibliographystyle{elsarticle-harv}
\bibliography{refman,Marta}

\end{document}